\documentclass[10pt,twocolumn]{article}

\usepackage[margin=0.75in,columnsep=0.25in]{geometry}
\usepackage{mathptmx} 
\usepackage{amsmath,amssymb,amsthm}
\usepackage{graphicx}
\usepackage{booktabs}
\usepackage{makecell}
\usepackage{microtype}
\usepackage{xcolor}
\usepackage[hidelinks]{hyperref}
\usepackage[capitalise]{cleveref}
\usepackage{enumitem}
\setlist{leftmargin=*,itemsep=1pt,topsep=2pt,parsep=0pt}

\graphicspath{{figs/}}

\newtheorem{theorem}{Theorem}
\newtheorem{lemma}{Lemma}
\newtheorem{proposition}{Proposition}

\newcommand{\sys}{TEMPO}
\newcommand{\nstar}{n^{*}}
\title{\bf \sys{}: Makespan-Aware Expert-Parallel Load Balancing\\
Across Memory- and Compute-Bound Regimes}
\author{Jie Li, Chenxin Jia, Jinliang Shen, Cunzhuang Liu,
Ruiyi Ding, Jianwen Xian, Kang He,
Chengru Song\thanks{Corresponding author.}\\[3pt]
KlingAI Research\\[2pt]
\normalsize\texttt{lijie@klingai.com, songchengru@klingai.com}}
\date{}

\begin{document}
\maketitle

\begin{abstract}
In expert-parallel (EP) Mixture-of-Experts serving, every layer ends at a
synchronization point: the time a batch spends in the MoE block is the
time of the \emph{slowest} GPU. Production dispatchers balance
proxies---token counts (EPLB, LPLB, UltraEP) or activated-expert counts
(METRO)---implicitly assuming expert time is linear in one of the two.
Measurements on two generations of datacenter GPUs\footnote{GPU model identifiers, microarchitecture names, and vendor specifications are anonymized in this version. The older 8-GPU node is referred to as Testbed~A, the newer-generation nodes as Testbed~B.} show it is neither, for two hardware
reasons a token count cannot express: below
$\nstar\!\approx\!156$--$168$ tokens per expert, streaming the expert's
weights from HBM dominates---cost attaches to \emph{activated replicas},
not tokens---and above it, grouped GEMM rounds every expert's tokens up
to 128-token $M$-tiles, so \emph{splitting} an expert manufactures
padded compute. One max-affine profile $t=\max(a+bG,\,c+\beta N)$
prices both (the flat term is the first step of the tile staircase; a
single extra parameter prices the rest). Realistic decode batches hold
hot experts in the linear regime and cold experts in the flat regime
\emph{simultaneously}; on recorded batches, the dispatches the deployed
proxy policies actually produce differ by $1.4$--$1.6\times$ in modeled
block time (p95 up to $1.7\times$), and \emph{which} proxy pays flips
with the regime. We formalize per-batch dispatch as a
fixed-charge makespan problem---NP-hard on two fully replicated GPUs,
polynomial in each degenerate limit, with an additive approximation
guarantee under full replication---and present \sys{}, a
makespan-aware dispatcher that solves it in milliseconds off the
critical path; its SGLang integration runs the solver out-of-process
and fuses dispatch with count collection into one in-graph kernel. On a
calibrated phase diagram, anchored by an 8-GPU Testbed~A wall-clock
microbenchmark, \sys{} stays within 1\% of the best fixed baseline at
every evaluated grid point and wins by up to 15.5\% where regimes mix
(model-scored; the largest wins are EP32--64 extrapolations).
End-to-end on Testbed~B, two flagship models bracket the predicted win
region: Qwen3-235B (inside) gains $4$--$6\%$ throughput on
long-context and decode-heavy traffic, cuts p99 inter-token latency by
${\sim}15.6\%$, and gains $4$--$7\%$ on 2-node EP16 with a
topology-aware source split; DeepSeek-V3 (outside,
communication-dominated) returns only mechanism cost for every
adaptive policy tested. \sys{} also outperforms SGLang's shipped
token-LP dispatcher by a wide margin, though a like-for-like port
attributes most of that gap to integration architecture rather than
the objective. A phase diagram, not a
universal win, is the paper's claim: it predicts both outcomes before
deployment.
\end{abstract}

\section{Introduction}\label{sec:intro}

Mixture-of-Experts (MoE) models dominate the quality-per-FLOP frontier of
large language models~\cite{shazeer2017moe,fedus2022switch,deepseekv3},
and expert parallelism (EP) is the standard way to serve them: experts are
sharded across GPUs, and each layer performs a dispatch all-to-all,
per-expert grouped GEMMs, and a combine all-to-all~\cite{gshard,deepep}.
Both collectives synchronize the EP group, so the MoE block of every layer
costs exactly the \emph{makespan}---the maximum per-GPU time---regardless
of how idle the other GPUs are. Balancing this makespan is the
load-balancing problem this paper addresses.

Production systems attack it in two layers. A \emph{placement} layer
(e.g., DeepSeek's EPLB~\cite{eplb}) replicates hot experts and re-shuffles
them across GPUs on a minutes-scale horizon, using long-run average load.
A \emph{dispatch} layer decides, for each batch, how the tokens of each
expert are split across its replicas. Because batch-to-batch routing
fluctuates far more than the average (the bottleneck GPU of a given batch
is routinely 1.5--2$\times$ the mean), the dispatch layer is where the
remaining imbalance lives, and it is the layer we study.

Existing dispatchers optimize proxies. Token-balancing methods---LPLB's
linear program~\cite{lplb}, UltraEP's exact quotas~\cite{ultraep}, and the
uniform replica split EPLB defaults to---minimize the maximum number of
\emph{tokens} per GPU. Activation-balancing methods---most recently
METRO~\cite{metro}---minimize the maximum number of \emph{activated
experts} per GPU. Each proxy encodes an implicit linearity assumption:
token balancing assumes time $\propto$ tokens; activation balancing
assumes time $\propto$ expert count.

Neither assumption survives measurement. On Testbed~A GPUs, per-expert FFN time
under DeepGEMM fp8 grouped GEMM~\cite{deepgemm} is flat in token count up
to an inflection $\nstar\!\approx\!156$--$168$ tokens per expert (loading
expert weights dominates) and linear beyond it
(\cref{fig:idea}a, \cref{fig:cost}). The per-GPU cost is therefore
two-regime:
\begin{equation}
  t_g \;=\; \max\bigl(a + b\,G_g,\;\; c + \beta\,N_g\bigr),
  \label{eq:cost}
\end{equation}

\begin{figure*}[t]
\centering
\includegraphics[width=0.82\textwidth]{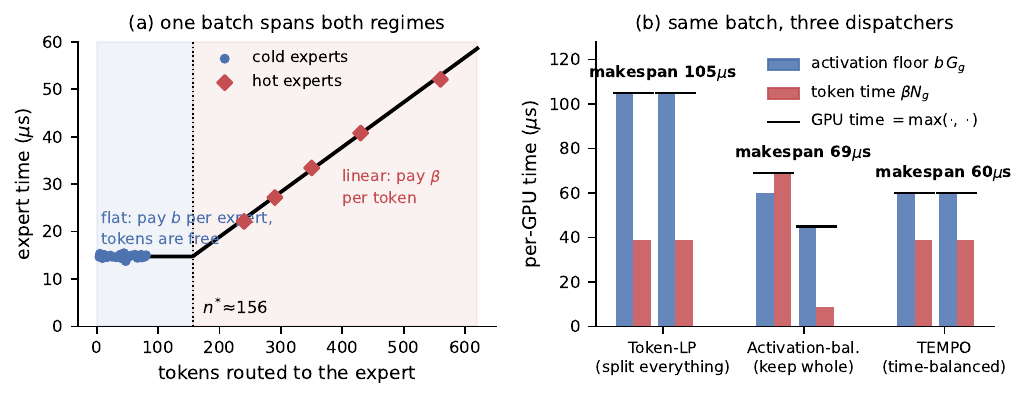}
\caption{The idea in one picture. (a) Measured per-expert time (DSv3
shape, fp8): flat below $\nstar$ (weight loading dominates; extra
tokens free), linear above (every token costs $\beta$). Markers are
the experts of \emph{one} decode batch: both regimes are active
simultaneously. (b) A toy batch (one hot expert, 600 tokens; six cold,
30 each; two GPUs, full replication) dispatched three ways; each GPU's
time is $\max(b\,G_g,\ \beta N_g)$. Token balancing splits everything
and pays 7 floors per GPU ($105\,\mu$s); activation balancing
overloads the hot GPU ($69\,\mu$s); balancing modeled \emph{time}
splits only the hot expert ($60\,\mu$s, $1.75\times$ better than
token-LP).}
\label{fig:idea}
\end{figure*}
where $G_g$ is the number of activated (expert, replica) pairs on GPU $g$
and $N_g$ its token count. The fitted activation floor $b$ spans
$1.74\,\mu$s (Qwen3-30B shape) to $14.8\,\mu$s (DeepSeek-V3 shape)---an
$8.5\times$ spread that grows with expert size---with mean fit error
4--8\% (\cref{tab:params}). Both parameters have direct hardware
readings. The floor $b$ is the cost of streaming one more replica's
weights from HBM: a memory-roofline term that attaches to
\emph{instantiations}, not tokens, so no token count can see it. And
above the inflection, compute itself is not linear in tokens: grouped
GEMM pads every expert's tokens to 128-token $M$-tiles
(\cref{eq:tilecost}), so fragmenting one expert across $k$ replicas
rounds up $k$ times---token balancing does not merely rearrange work,
it \emph{manufactures} padded tiles. The two effects are one mechanism
at two scales: the flat regime is the first step of the tile
staircase, where the step height is dominated by the weight load.

Two consequences follow. First, each proxy is \emph{systematically} wrong
in the other's home regime: on recorded batches---same batch, same
placement, varying only the split---the dispatches the deployed proxy
policies produce differ by $1.2$--$1.6\times$ in modeled block time
(p95 $1.7\times$), each proxy paying $30$--$70\%$ away from its home
regime (\cref{sec:feasible}).
Second---the pivotal empirical fact---the regimes are not separate
operating points a deployment can be tuned for: in real routing traces
at decode batch sizes, \textbf{92--100\% of batches contain both
regimes at once}---the few hot experts sit deep in the linear region
carrying ${\sim}91\%$ of tokens while roughly half of the activated
experts remain in the flat region (\cref{fig:regime}). Token balancing
fragments cold experts and pays hidden activation floors; activation
balancing piles tokens onto hot replicas.

We argue the correct objective is the makespan of the \emph{measured} cost
function \eqref{eq:cost}, and we build \sys{} (Time-modeled
Expert-Parallel Optimization) around it:

\begin{itemize}
\item \textbf{Concept: a cost model that prices what proxies cannot,
  and a phase diagram built on it (\cref{sec:phase}).} The model's two
  non-linear terms are the two hardware effects invisible to token
  counts---memory-bound weight streaming (the floor $b$) and $M$-tile
  padding (a one-parameter staircase term)---measured black-box in ten
  minutes on the deployed kernel. On top of it we formalize per-batch
  dispatch as a fixed-charge makespan problem: NP-hard already with
  two GPUs and full replication, yet polynomial in each degenerate
  limit ($b\!\to\!0$ is the token LP; $\beta\!\to\!0$ a
  semi-matching)---the hardness, like the systems problem, lives in
  the regime interaction. Sweeping the calibrated model yields a phase
  diagram whose winning fixed policy flips along an analytically
  predictable boundary; no fixed proxy wins everywhere.
\item \textbf{Algorithm: \texttt{tempo\_fast} (\cref{sec:design}).}
  Cost-aware seeding, augmenting-chain activation rebalancing (the
  flat-regime move), bottleneck local search with partial migrations
  (the linear-regime move), and an ensemble that scores the token-LP
  and a round-robin certificate under the same model: within 1.02\% of
  a 10-second MILP at ${\sim}2$\,ms, never worse than either classical
  proxy beyond 1\% by construction \emph{under the model}, with an
  additive $\mathrm{OPT}+\max(b,\beta n_{\max})$ guarantee under full
  replication (\cref{thm:rr}).
\item \textbf{Methodology: calibrated simulation anchored by wall
  clock (\cref{sec:eval}).} Headline numbers come from a simulator
  driven by the measured model, and we close the loop: an
  8-GPU Testbed~A EP microbenchmark validates transfer (93\% pairwise
  ranking agreement; 2.2--5.5\,pp gain-transfer error after black-box
  recalibration) and exposes model limits honestly (\cref{sec:limits}).
\item \textbf{Deployability: SGLang integration
  (\cref{sec:integration}).} One fused CUDA-graph-resident kernel
  (probabilistic dispatch + count collection); the solver runs
  out-of-process and publishes tables race-safely. Zero kernels and
  zero collectives on the critical path, verified by a no-op variant
  that ties static within noise (\cref{sec:serving-proxy}). This
  architecture delivers 1.4--1.7$\times$ the throughput of SGLang's
  shipped LPLB dispatcher on Qwen3-235B-FP8 (\cref{sec:serving235b});
  a like-for-like port of token-LP into our worker shows most of that
  gap is architectural, the time model's residual edge being
  stability at the compute-bound point (\cref{sec:serving-drift}).
\item \textbf{Topology: multi-node dispatch needs locality
  (\cref{sec:comm-topo,sec:serving-ep16}).} On 2-node EP16 a flat
  table regresses $3.5\%$ below static at the all-to-all-bound point;
  a two-stage split (makespan solve, then same-node-first
  source-to-replica pairing) preserves per-GPU loads, minimizes
  inter-node traffic, and flips it to $+4\%$ ($+7.2\%$ on Qwen3-235B
  end-to-end), with no kernel changes. A staleness sweep (drift
  $8/16/32$) shows the win region is also a band in time: too little
  drift and there is nothing to fix, too much and only placement can.
\end{itemize}

We are explicit about scope. On small-expert shapes
($b\!\approx\!1.7\,\mu$s) policy choice barely moves wall-clock time
and our DeepSeek-V2-Lite serving runs confirm adaptive dispatch is a
wash there; the win region is flagship shapes (DSv3-class experts,
fp8, decode) where the activation floor is 5--8$\times$ larger. A
third cost dimension the $(G,N)$ model cannot express (kernel
preference for uniform per-slot loads) decides near-ties
(\cref{sec:limits}).

\section{Background and Motivation}\label{sec:motivation}

\subsection{The EP synchronization structure}
Per MoE layer: router $\to$ dispatch all-to-all $\to$ grouped expert GEMMs
(gate/up, activation, down) $\to$ combine all-to-all. Decode servers run
this under CUDA graphs with fused or fixed-shape pre/post kernels, so
the load-dependent term is the grouped GEMM (the all-to-alls scale
with routed bytes, absorbed by the communication extension of
\cref{sec:comm-model}); both all-to-alls synchronize the EP group, so
per-layer MoE time is $\max_g(\text{per-GPU time})$, multiplied across
dozens of layers. The slack is large: the per-batch bottleneck GPU
routinely runs $1.5$--$2\times$ the mean, and that gap between
``balanced on average'' (placement's job) and ``balanced this batch''
(dispatch's job) is where the recoverable time lives.

\subsection{Measured expert cost: weight streaming and tile padding,
one max-affine model}
We benchmark DeepGEMM fp8 masked grouped GEMM (the kernel SGLang/DeepEP
use in decode) across $(G,\text{tokens-per-expert})$ grids for three
expert shapes, with CUDA-graph timing, weight-copy rotation to defeat L2
reuse, and fixed \texttt{expected\_m} to freeze the JIT tuner
(\cref{sec:calib-details}). \cref{fig:cost}(a) shows per-expert time: flat
until $\nstar\!\approx\!156$ tokens, then linear. The two-piece max-affine
model \eqref{eq:cost} fits with 4--8\% mean error (\cref{tab:params});
bf16 loop implementations shift the inflection
($\nstar\!\approx\!331$) but keep the shape.

\begin{figure}[t]
\centering
\includegraphics[width=\linewidth]{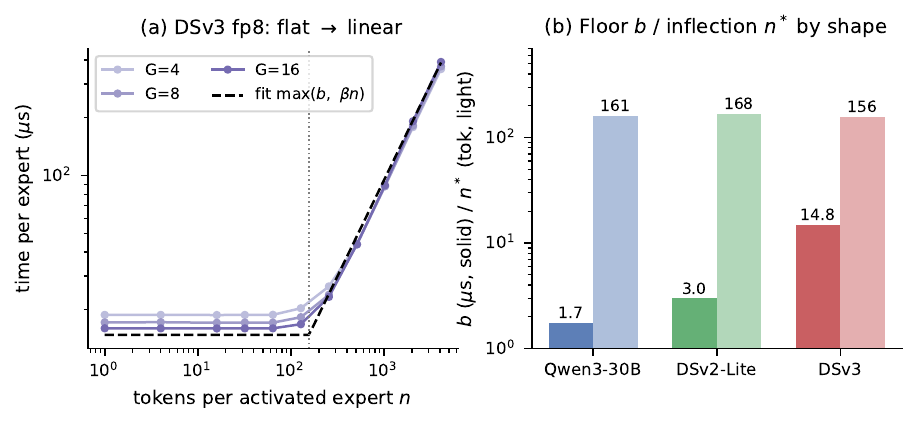}
\caption{(a) DSv3 fp8 grouped GEMM per-expert time: flat below
$\nstar\!\approx\!156$ tokens, linear above; dashed line is the fitted
max-affine model. (b) Activation floor $b$ and inflection $\nstar$ across
expert shapes; $b$ grows $8.5\times$ from Qwen3 to DSv3.}
\label{fig:cost}
\end{figure}

The activation floor $b$ is the physically meaningful quantity: the cost
of touching one more expert's weights on a GPU. A roofline check makes
this concrete: an expert FFN holds $3\,d_{\mathrm{model}}
d_{\mathrm{ff}}$ parameters ($44.0$\,MB for DSv3, $8.7$\,MB for
DSv2-Lite, $4.7$\,MB for Qwen3-30B at fp8), and dividing by Testbed~A's peak HBM bandwidth predicts floors of $13.1/2.6/1.4\,\mu$s.
The fitted $b$ ($14.8/3.0/1.7\,\mu$s, \cref{tab:params}) sits at
$1.13$--$1.24\times$ that bound---the floor \emph{is} weight
streaming, at $80$--$88\%$ of peak bandwidth. (The bf16 loop kernel
fits $45\,\mu$s against a $26\,\mu$s roofline: the two-regime shape
survives, the constant reflects the less efficient implementation.)
Fragmentation under token balancing is charged twice by the
hardware---once in weight streaming (each fragment re-loads the
expert) and once in tile padding (each fragment rounds up
separately)---and $b$ sets the price of the first and dominant charge.
This is what determines where each classical proxy fails.

\paragraph{The newer testbed and the full tile staircase.} Recalibrating on Testbed~B
(fp8 masked DeepGEMM, CUDA-graph timing) reproduces the two-regime
shape (\cref{fig:costtb}) and resolves the underlying mechanism at
fine grain: grouped GEMM processes tokens in $M$-tiles of 128, so cost
is a \emph{staircase} in tokens-per-expert---flat within $[1,128]$, a
jump at 129, flat again to 256 (\cref{fig:tile}). The flat regime of
\eqref{eq:cost} is precisely the first step of this staircase (weight
load dominates the step height); the later steps are sub-linear
($+11$--$24\%$ per boundary; extra tiles of the same expert reuse its
weights from L2), so a single extra parameter captures them:
\begin{equation}
  t = \max\bigl(a + bG + b_2 (T - G),\; c + \beta N\bigr),
  \qquad T = \textstyle\sum_e \lceil n_e / 128 \rceil,
  \label{eq:tilecost}
\end{equation}
with $b_2 \approx b/3$ (Qwen3-235B: $b{=}3.95$, $b_2{=}1.23$;
DSv3: $b{=}8.32$, $b_2{=}2.65\,\mu$s). The tile term cuts model error
in the multi-tile region from ${\sim}10\%$ (worst 25\%) to $2.0\%$ and
leaves the decode region ($n_e\le128$, already priced by $bG$)
untouched; the solver supports it by tracking per-GPU tile counts and
snapping partial splits to tile boundaries (\cref{sec:solver}). On
decode the tile-aware and tile-blind solutions coincide (the floor
\emph{is} the staircase's only active step); at prefill scale the term
earns its keep---on recorded routing rescaled to $4096$--$8192$ tokens
per GPU, tile-aware search changes the tile-blind solution in
$97$--$99\%$ of batches and cuts modeled block time by a median
$4.5$--$6.0\%$ (p95 $9.4\%$), while the uniform replica split
manufactures $3$--$10\%$ extra padded tiles---vanishing by $16{,}384$
tokens per GPU where the linear term dominates. Model-scored numbers
(\cref{sec:transfer}); we enable $b_2$ for prefill-heavy studies and
leave it off for decode.

\begin{figure}[t]
\centering
\includegraphics[width=\linewidth]{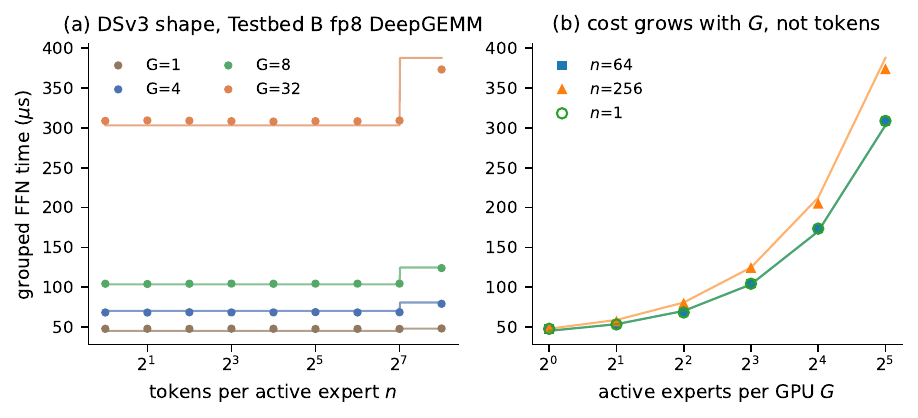}
\caption{Testbed~B fp8 masked DeepGEMM, DSv3 shape. (a) Cost vs.\
tokens-per-expert at several activation counts $G$; lines are the
tile-aware fit \eqref{eq:tilecost}. (b) At fixed tokens-per-expert,
cost is driven by $G$: the $n{=}1$ and $n{=}64$ curves coincide.}
\label{fig:costtb}
\end{figure}

\begin{figure}[t]
\centering
\includegraphics[width=\linewidth]{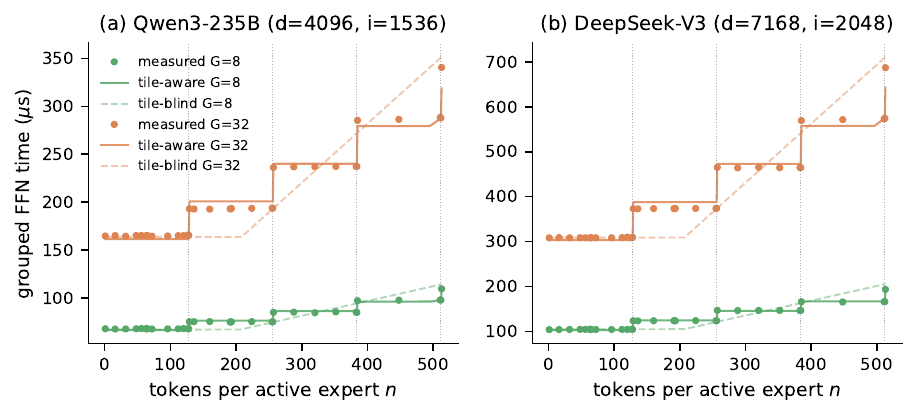}
\caption{Tile staircase, measured with a fine sweep around 128-token
boundaries (dotted verticals). The tile-blind max-affine model (dashed)
misprices the plateaus by up to 25\%; the one-parameter tile term
(solid) tracks the steps at 2\% mean error.}
\label{fig:tile}
\end{figure}

\subsection{What proxy mispricing costs on real batches}
\label{sec:feasible}
Two views, from coarse to operational. On the raw measurement grid
(\cref{fig:proxy}), iso-token sets (same $N$, different $G$) spread up
to $2.5\times$ (Qwen3) / $7.6\times$ (DSv3) in true time, and
iso-activation sets up to $18\times$ / $24\times$---but the grid spans
token counts a single batch never mixes, so these bound the model's
curvature, not a dispatch decision.

The operational question is narrower: for one \emph{recorded} batch,
fixed expert set and fixed placement, how different are the dispatches
the actual proxy policies produce? \cref{tab:feasible} answers on the
recorded proxy traces (EP8, $R{=}64$, calibrated three-term model,
\cref{sec:serving-proxy}): the four proxy dispatches of the same batch
differ by $1.4$--$1.6\times$ in modeled block time at the median (p95
up to $1.7\times$), and the identity of the expensive proxy flips with
the regime---at $B{=}128$ (floor regime) the uniform replica split
pays $51\%$ over the modeled optimum while whole-expert activation
balancing (METRO) pays the least ($11\%$); at $B{=}512$--$1024$
(compute regime) static pays $47$--$56\%$ while token-LP \emph{is} the
optimum (the ensemble selects it). No fixed proxy is safe across the
ladder: the worst-regime median penalty ranges from $11\%$ (METRO) to
$56\%$ (static).

\begin{table}[t]
\centering\small
\caption{Modeled block time of each proxy dispatch relative to the
\sys{} solution, on recorded batches (median/p95 over windows and
layers; fresh placement, $R{=}64$). ``Spread'' = max/min across the
four proxy dispatches of the same batch.}
\label{tab:feasible}
\begin{tabular}{lccccc}
\toprule
$B$ & static & uniform & token-LP & METRO & spread \\
\midrule
128  & 1.17/1.27 & 1.51/1.60 & 1.17/1.27 & 1.11/1.18 & 1.37/1.46 \\
512  & 1.47/1.58 & 1.12/1.20 & 1.00/1.00 & 1.05/1.10 & 1.47/1.58 \\
1024 & 1.56/1.70 & 1.04/1.10 & 1.00/1.00 & 1.06/1.12 & 1.56/1.70 \\
\bottomrule
\end{tabular}
\end{table}

\subsection{Mixed regimes are the common case}
Using the calibrated $\nstar$ as classifier, \cref{fig:regime} asks
whether flat- and linear-region experts coexist within batches. Across
four workload traces (Qwen3-30B, DSv2-Lite $\times$ wikitext, GSM8K) at
128--512 tokens/GPU decode, 92--100\% of batches have ${\ge}20\%$ of
activated experts in the flat region \emph{and} ${\ge}20\%$ of tokens in
the linear region. At $B{=}512$, flat-region experts are 47\% of
activations while linear-region experts carry 91\% of tokens. Synthetic
Zipf routing ($s{=}1.2$) is mixed in 100\% of batches across the whole
batch ladder. The transition zone is not a corner case---it is the decode
operating range, where serving economics live.

\begin{figure}[t]
\centering
\includegraphics[width=\linewidth]{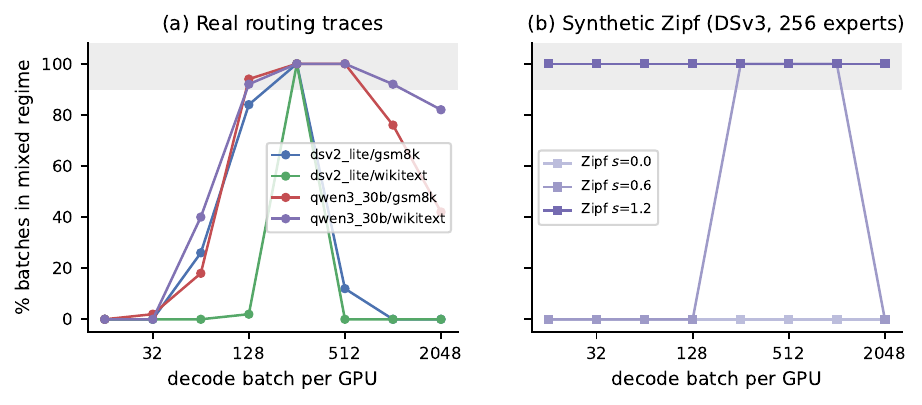}
\caption{Fraction of batches in the mixed regime (${\ge}20\%$ experts
flat and ${\ge}20\%$ tokens linear) vs.\ batch size. (a) Real traces;
(b) synthetic Zipf, DSv3 shape.}
\label{fig:regime}
\end{figure}

\section{Problem Formulation and the Phase Diagram}\label{sec:phase}

\subsection{Fixed-charge makespan dispatch}
Given per-expert token counts $n_e$ (from the router), replica sets
$R(e)$ (from the placement layer), token shares $x_{e,g}\ge 0$
($\sum_g x_{e,g}=n_e$, supported on $R(e)$) and activation indicators
$z_{e,g}=\mathbf{1}[x_{e,g}>0]$:
\begin{equation}
  \min\; \max_g\; \max\Bigl(a + b\textstyle\sum_e z_{e,g},\;
  c + \beta\textstyle\sum_e x_{e,g}\Bigr).
  \label{eq:problem}
\end{equation}

\begin{theorem}[NP-hardness]\label{thm:nphard}
Deciding whether a dispatch with makespan $\le T$ exists is NP-complete,
already with 2 GPUs, full replication, and $a=c=0$.
\end{theorem}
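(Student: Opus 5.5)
The plan is a reduction from \emph{equal-cardinality} \textsc{Partition}, which Garey and Johnson list as NP-complete: given positive integers $s_1,\dots,s_m$ ($m$ even) with $\sum_i s_i = S$, decide whether some subset of exactly $m/2$ items has sum $S/2$. The idea is that the activation term $b\,G_g$ forces each expert to stay whole and caps how many experts each GPU may hold. The token term $\beta N_g$ then forces the token loads to balance. With full replication there are no placement constraints, so only these two budgets matter.

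\emph{Membership in NP.} A certificate is the activation pattern $z_{e,g}\in\{0,1\}$. Given $z$, the condition $b\sum_e z_{e,g}\le T$ (with $a=0$) is checked directly. Whether token shares $x$ exist with $x_{e,g}>0$ exactly where $z_{e,g}=1$, $\sum_g x_{e,g}=n_e$, and $\beta\sum_e x_{e,g}\le T$ is a linear feasibility question. The strict positivity constraints are handled in the standard way: maximize a common slack $\epsilon$ subject to $x_{e,g}\ge\epsilon$ on the support and test whether $\epsilon^*>0$. This LP is solved in polynomial time, so verification is polynomial. (Equivalently, one can guess a rational $x$ of polynomial bit size, since an optimal vertex of that LP has one.)

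\emph{Construction.} Map the instance to a problem with one expert per item, $n_e=s_e$, two GPUs, full replication, $a=c=0$, $\beta=1$, $b=S/m$, and threshold $T=S/2$. The makespan condition then splits into two budgets. The activation budget $b\,G_g\le T$ is equivalent to $G_g\le m/2$, and the token budget $N_g\le S/2$ must hold on both GPUs.

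Every expert has $n_e>0$, so it is activated on at least one GPU. Hence $G_1+G_2\ge m$, with equality exactly when no expert is split. Combined with $G_g\le m/2$ for both GPUs, any feasible dispatch has $G_1=G_2=m/2$ and splits nothing. Each GPU therefore holds exactly $m/2$ whole experts. Since $N_1+N_2=S$ and both are at most $S/2$, we get $N_1=N_2=S/2$, so GPU~1's experts form an equal-cardinality partition. Conversely, such a partition, with each half assigned to one GPU and nothing split, gives $G_g=m/2$ and $N_g=S/2$, hence makespan exactly $T$. The numbers involved are polynomial in the input size, so the reduction is polynomial.

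The step I expect to need the most care is the counting argument that the activation budget rules out \emph{every} fractional split. The fractional relaxation $x_{e,g}\ge0$ would otherwise make the token problem trivially solvable. The argument rests on the tight total $G_1+G_2\ge m$ together with the choice $b=S/m$ that makes the cap exactly $m/2$. To avoid fractional $b$, one can instead scale: set $b=S$, $\beta=m$, $T=mS/2$, which gives the same two budgets.
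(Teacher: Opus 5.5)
Your proof is correct and is essentially the paper's. It uses the same reduction from balanced (equal-cardinality) \textsc{Partition} with the paper's parameters specialized to $T=W=S/2$ (so $\beta=1$, $b=S/m$), and the same counting step $G_1+G_2\ge m+s$ that forbids splitting and forces $N_1=N_2=S/2$; your NP-membership argument (certify $z$, then decide whether a compatible $x$ exists by an LP with a common slack) is more careful than the paper's one-line ``verify $(x,z)$ in linear time,'' which tacitly treats a real-valued $x$ as a polynomial-size certificate.
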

\begin{proof}[Proof sketch]
Reduce from Balanced PARTITION. Choose $b=T/(m/2)$, $\beta=T/W$; the max
form turns the single budget $T$ into two simultaneous per-GPU knapsack
constraints---an activation-cardinality cap $K=m/2$ and a token-capacity
cap $C=W$. Any split raises total activations above $2K$, so the
cardinality budget forbids splitting, and feasibility collapses to
balanced partition. Full proof in \cref{sec:proofs}.
\end{proof}

\begin{lemma}[$b\to 0$]\label{lem:lp}
With $a=b=0$ the objective reduces to $\min\max_g N_g$---exactly the
token LP (the LPLB/UltraEP objective); polynomial.
\end{lemma}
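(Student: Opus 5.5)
The plan is to substitute $a=b=0$ into \eqref{eq:problem}, simplify the per-GPU cost, and then show that the reduced problem is a linear program of polynomial size.

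First, the substitution. The activation branch becomes identically $0$, so the per-GPU cost is
\[
t_g=\max\bigl(0,\ c+\beta N_g\bigr), \qquad N_g=\textstyle\sum_e x_{e,g}.
\]
Since $N_g\ge 0$, and $c,\beta\ge 0$ for any physical calibration, this equals $c+\beta N_g$. I will assume $\beta>0$; if $\beta=0$ every dispatch is optimal and there is nothing to prove. The map $u\mapsto c+\beta u$ is strictly increasing and commutes with $\max_g$. Hence minimizing $\max_g t_g$ has exactly the same optimal set as minimizing $\max_g N_g$, and the optimal values are related by that affine map.

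Second, the indicators drop out. The variables $z_{e,g}=\mathbf{1}[x_{e,g}>0]$ were the only non-convex, combinatorial part of \eqref{eq:problem}. With $b=0$ they carry zero coefficient, so the remaining problem depends on $x$ alone:
\[
\min\ \lambda \quad\text{s.t.}\quad \textstyle\sum_e x_{e,g}\le\lambda\ \ \forall g, \qquad \textstyle\sum_{g\in R(e)} x_{e,g}=n_e, \qquad x_{e,g}\ge 0 .
\]
This is precisely the token-balancing LP solved by LPLB, and the exact-quota objective of UltraEP. It has $O(\sum_e|R(e)|)$ variables and constraints. Solving it in polynomial time is then immediate, for example by the ellipsoid or interior-point method.

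If one wants a combinatorial argument instead, there are two options:
\begin{itemize}
\item Binary-search $\lambda$ and test feasibility as a bipartite max-flow from experts to GPUs, with source capacities $n_e$ and sink capacities $\lambda$.
\item Use a closed form: the optimum equals $\max_{S}\bigl(\sum_{e\in S} n_e\bigr)/|\bigcup_{e\in S}R(e)|$ over expert subsets $S$, computable by parametric flow.
\end{itemize}

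I expect no substantive obstacle. The only points needing care are two modeling conventions. The first is the sign conventions $c,\beta\ge 0$, which are needed for the outer $\max$ to simplify. The second is that shares are allowed to be fractional. If integrality of token counts is required, I would observe that the flow formulation has integral capacities for integer $\lambda$, so integrality of max-flow yields an integral optimal dispatch. That optimum differs from the fractional one by less than one token, and an outer search over integer $\lambda$ keeps the whole procedure polynomial.
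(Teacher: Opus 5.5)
Your proposal is correct and follows the paper's own argument. The paper's one-line proof makes the same three observations: with $a=b=0$ the objective is a monotone transform of $\max_g N_g$, the indicators $z$ become irrelevant, and the LP relaxation is therefore exact. Your flow-based alternatives, including the Hall-type value $\max_S \sum_{e\in S}n_e/|\bigcup_{e\in S}R(e)|$, and your integrality remark are sound extras that the paper does not give. You also do not need the hypothesis $c\ge 0$: $u\mapsto\max(0,c+\beta u)$ is nondecreasing for $\beta\ge 0$, so every token-LP optimum remains optimal for the original objective.
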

\begin{lemma}[$\beta\to 0$]\label{lem:semi}
With $c=\beta=0$ the objective reduces to $\min\max_g G_g$ over replica
sets---an optimal semi-matching~\cite{semimatching}; polynomial via
augmenting paths.
\end{lemma}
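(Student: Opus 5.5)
With $c=\beta=0$ (and $b>0$), the objective \eqref{eq:problem} becomes $\max_g (a + b\,G_g) = a + b\max_g G_g$, where $G_g=\sum_e z_{e,g}$. Since $a$ and $b$ are constants, the dispatch problem is equivalent to minimizing $\max_g G_g$. I would prove the lemma in three steps: show that splitting never helps, rewrite the problem as a semi-matching, and invoke the semi-matching algorithm.

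\emph{Step 1: splitting never helps.} Take any feasible dispatch $x$. For each activated expert $e$ (those with $n_e>0$), pick one GPU $g_e\in R(e)$ with $x_{e,g_e}>0$. Build $x'$ by sending all $n_e$ tokens to $g_e$. This keeps $x'$ supported on $R(e)$ and preserves $\sum_g x'_{e,g}=n_e$. It also satisfies $z'_{e,g}\le z_{e,g}$ for every pair, so $G'_g\le G_g$ for every $g$ and the objective does not increase. Hence some optimal dispatch assigns each activated expert to exactly one replica. Token amounts no longer matter, since the objective sees only the indicators.

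\emph{Step 2: reduction.} Form the bipartite graph with activated experts $E^+=\{e:n_e>0\}$ on the left and GPUs on the right, with an edge $(e,g)$ iff $g\in R(e)$. A single-replica dispatch is exactly a semi-matching: every left vertex is assigned one incident edge. In that case $G_g$ equals the load of $g$, i.e., the number of experts assigned to it. So minimizing $\max_g G_g$ is the min-max-load semi-matching problem, with inactive experts dropped because $z_{e,g}=0$ is forced for them. Conversely, every semi-matching is a feasible dispatch, by putting all of $n_e$ on the chosen edge.

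\emph{Step 3: polynomiality.} An optimal semi-matching in the sense of~\cite{semimatching} minimizes $\sum_g G_g(G_g+1)/2$. Such a semi-matching admits no cost-reducing alternating path, meaning a path from a GPU of load $\ell$ to a GPU of load $\le \ell-2$. The cited paper shows it simultaneously minimizes the maximum load, and that it is computed in polynomial time (e.g.\ $O(|E|\sqrt{|V|}\log|V|)$) by repeatedly augmenting along such paths. Alternatively, one can binary search on $k\in\{1,\dots,|E^+|\}$ and test a bipartite $b$-matching with GPU capacities $k$ via max-flow.

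The only real content is Step 1. It relies on the activation count being monotone under consolidation, and that monotonicity holds here exactly because $\beta=0$ removes the token term. I expect no real obstacle beyond stating the citation precisely.
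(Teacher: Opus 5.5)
Your proposal is correct and follows the paper's argument, just spelled out in more detail: consolidating each live expert onto one replica it already uses can only remove activations (which is exactly why $\beta=0$ is needed), so single-replica dispatches suffice; these are precisely semi-matchings over replica edges; and an optimal semi-matching minimizes the maximum load and is found via augmenting (cost-reducing) paths. The one quibble is an attribution: as far as I recall, the $O(|E|\sqrt{|V|}\log|V|)$ bound comes from later work of Fakcharoenphol, Laekhanukit and Nanongkai rather than the original cost-reducing-path algorithm of Harvey et al.; the lemma only needs polynomial time, so this does not affect correctness, and your binary-search-plus-max-flow alternative also suffices.
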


The problem is polynomial in each pure regime and NP-hard only when the
regimes interact---a formal mirror of the systems observation that each
proxy is fine at home and the transition zone is what requires a new
algorithm. This complements the recent NP-hardness of MoE serving at
GPU-quota granularity~\cite{moeserving}: ours is at per-batch dispatch
granularity under fixed placement, and neither implies the other.

Hardness notwithstanding, the problem admits a simple algorithm with an
\emph{additive} guarantee against the splitting-allowed optimum:

\begin{theorem}[Additive approximation]\label{thm:rr}
Under full replication, let $A_3$ place whole experts by descending-token
round-robin: GPU $i$ takes the experts ranked $i, i{+}g, i{+}2g, \dots$
Then $M(A_3)\le \mathrm{OPT}+\max(b,\,\beta n_{\max})$, where $n_{\max}$
is the largest per-expert token count and $\mathrm{OPT}$ may split
tokens arbitrarily; the additive term equals $\beta n_{\max}$ whenever
$n_{\max}\ge b/\beta$ (true in every recorded trace we use). The same
argument covers any per-GPU cost
$\max_k\bigl(a_k+\sum_e\varphi_k(n_e)\bigr)$ with nondecreasing
$\varphi_k\ge0$, with additive term $\max_k\varphi_k(n_{\max})$; for
the tile-aware model of \cref{eq:tilecost} this term is
$\max\bigl(b+b_2(\lceil n_{\max}/128\rceil{-}1),\,\beta
n_{\max}\bigr)$ (\cref{sec:proofs}).
\end{theorem}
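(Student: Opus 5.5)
The plan is to compare $A_3$ with $\mathrm{OPT}$ \emph{term by term} inside the max-affine cost. Round-robin is a whole-expert dispatch, so its per-GPU cost is $\max(a+bG_i,\,c+\beta N_i)$ with $G_i$ the number of experts it receives. For each term I will derive a lower bound on $\mathrm{OPT}$ from an averaging argument, and an upper bound on the same term under $A_3$ from the descending round-robin order. The additive loss on term $k$ will be at most the cost of one largest item, and taking the max over the two terms gives $\max(b,\beta n_{\max})$.

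\textbf{Activation term.} In any feasible dispatch, even one that splits tokens, each of the $m$ active experts has $z_{e,g}=1$ on at least one GPU. Hence $\sum_g G_g\ge m$, so some GPU has $G_g\ge\lceil m/g\rceil$, and
\[
\mathrm{OPT}\ge a+b\lceil m/g\rceil .
\]
Round-robin assigns exactly $\lceil m/g\rceil$ or $\lfloor m/g\rfloor$ experts to each GPU. Therefore $a+bG_i\le\mathrm{OPT}$, with no additive loss (so the $b$ in the bound is slack here).

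\textbf{Token term.} Let $W=\sum_e n_e$. Averaging gives
\[
\mathrm{OPT}\ge c+\beta W/g .
\]
Sort the experts by descending $n_e$ and cut the ranking into blocks $B_0,B_1,\dots$ of $g$ consecutive ranks. GPU $i$ takes exactly one expert from each block. The expert it takes from block $B_j$, $j\ge1$, is no larger than every expert of $B_{j-1}$, hence no larger than the average $\frac1g\sum_{e\in B_{j-1}}n_e$. Bounding the block-$B_0$ item by $n_{\max}$ and summing gives
\[
N_i\le n_{\max}+W/g .
\]
So $c+\beta N_i\le \mathrm{OPT}+\beta n_{\max}$. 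Combining both terms, $M(A_3)\le\mathrm{OPT}+\max(0,\beta n_{\max})\le \mathrm{OPT}+\max(b,\beta n_{\max})$. The remark that the term equals $\beta n_{\max}$ when $n_{\max}\ge b/\beta$ is then immediate.

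\textbf{Generalization.} I would redo the same two steps for each $k$ with per-expert cost $\varphi_k$, where $\varphi_k$ is nondecreasing, $\varphi_k(0)=0$, and an expert split into shares $x_{e,g}$ is charged $\sum_g\varphi_k(x_{e,g})$.

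The upper bound is unchanged. Monotonicity of $\varphi_k$ preserves the sorted order, so the block argument gives
\[
\textstyle\sum_{e\in i}\varphi_k(n_e)\le\varphi_k(n_{\max})+\frac1g\sum_e\varphi_k(n_e).
\]

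The step I expect to be the real obstacle is the lower bound
\[
\mathrm{OPT}\ge a_k+\tfrac1g\sum_e\varphi_k(n_e) .
\]
It holds only if splitting never \emph{reduces} total cost, i.e.\ $\sum_g\varphi_k(x_{e,g})\ge\varphi_k(n_e)$. This is subadditivity of $\varphi_k$, which is not implied by nonnegativity and monotonicity alone, so I would state it as the operative hypothesis. I would then verify it for each concrete term:
\begin{itemize}
\item The indicator $b\,\mathbf 1[x>0]$ and the linear term $\beta x$ are clearly subadditive.
\item For the tile term $\varphi(x)=b+b_2(\lceil x/128\rceil-1)$ with $x>0$: splitting into $x_1+x_2$ gives $2b+b_2(\lceil x_1/128\rceil+\lceil x_2/128\rceil-2)$. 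Using $\lceil x/128\rceil\le\lceil x_1/128\rceil+\lceil x_2/128\rceil$ and $b\ge b_2$, this is at least $\varphi(x)$.
\end{itemize}
With these checks, the additive term for the tile-aware model \cref{eq:tilecost} is $\max_k\varphi_k(n_{\max})=\max\bigl(b+b_2(\lceil n_{\max}/128\rceil-1),\,\beta n_{\max}\bigr)$, as claimed.
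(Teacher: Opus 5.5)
Your proof is correct. Its skeleton matches the paper's: an averaging lower bound on $\mathrm{OPT}$ for each cost piece, a round-robin upper bound for the same piece, and a combination through the max. The differences are in how the upper bound is obtained and in which hypotheses are made explicit.

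\textbf{Upper bound.} The paper works through the cross-GPU ordering. Row-wise domination makes GPU~$1$ the heaviest and GPU~$g$ the lightest under every nondecreasing measure simultaneously. A head--tail offset then gives $\Phi_k(1)\le\varphi_k(n_1)+\Phi_k(g)\le\varphi_k(n_1)+\Phi_k/g$. You instead charge each GPU's block-$j$ item to the average of the full block $j-1$. This bounds every GPU directly and never needs the simultaneous ordering. The paper's route does show that a single GPU is the bottleneck for all pieces at once, but its final inequality is applied term by term and does not use that fact.

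\textbf{Activation term.} You keep integrality, $\mathrm{OPT}\ge a+b\lceil m/g\rceil$. The paper records the $\lceil E/g\rceil$ count but combines using only $\mathrm{OPT}\ge a+bm/g$. Your version removes the $b$ entirely and gives the sharper $M(A_3)\le\mathrm{OPT}+\beta n_{\max}$. That is exactly the quantity the paper's tightness experiments normalize by.

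\textbf{Subadditivity hypothesis.} This is the most substantive difference, and it is a genuine repair rather than extra caution. The paper asserts that $\Phi_k$ ``must be carried regardless of splitting,'' because token mass is conserved and splitting only adds activations. That holds for the linear and indicator pieces, but not for arbitrary nondecreasing $\varphi_k\ge0$. Without your hypothesis the general claim is false. Take $\varphi(n)=n^2$, $a_k=0$, $g=4$, and eight experts of $N$ tokens each. Then $A_3$ pays $2N^2$, while splitting every expert evenly costs $N^2/2$ per GPU, so $M(A_3)>\mathrm{OPT}+\varphi(n_{\max})$, since the right side is at most $1.5N^2$.

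\textbf{Tile-aware term.} Your condition $b\ge b_2$ is exactly the threshold. If $b_2>b$, take $g=2$, $2m$ experts of $256$ tokens each, and a negligible token piece. Then $A_3$ pays $a+m(b+b_2)$, while halving every expert costs $a+2mb$. This violates the stated bound once $m>(b+b_2)/(b_2-b)$. The fitted values ($b_2\approx b/3$) satisfy your condition, so the paper's tile-aware constant stands, but only under the hypothesis you made explicit.
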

\begin{proof}[Proof sketch]
Round-robin gives every GPU at most $\lceil E/g\rceil$ experts, matching
the activation lower bound, and row-wise domination on the sorted order
makes GPU~$g$ the lightest under \emph{every} monotone measure
simultaneously, so GPU~1 exceeds the mean by at most one expert. Full
proof in \cref{sec:proofs}.
\end{proof}

The additive term is unavoidable for whole-expert algorithms (a single
giant expert loses $\beta n_{\max}(1-1/g)$ against a perfect split).
Dividing by the lower bound
$\mathrm{LB}=\max\bigl(a+b\lceil E_{\mathrm{live}}/g\rceil,\;
c+\beta N/g\bigr)\le\mathrm{OPT}$ gives
$M(A_3)\le(1{+}\epsilon)\,\mathrm{OPT}$,
$\epsilon=\max(b,\beta n_{\max})/\mathrm{LB}$ ($=\beta
n_{\max}/\mathrm{LB}$ on all our traces): median $4.1\%$, max $7.7\%$
at recorded decode batches, growing to $31\%$ at $4\times$ token
scale---a coarse safety net, not the source of the practical gains.
Because the ensemble scores $A_3$ as a candidate, the deployed solver
carries the bound under full replication \emph{weakened by its
switching tolerance}:
$M(A^{*})\le(\mathrm{OPT}+\max(b,\beta n_{\max}))/(1-\tau)$ with
$\tau=1\%$ (\cref{prop:ensemble}); setting $\tau=0$ recovers the
additive form at the cost of the tolerance's near-tie protection.
With restricted replicas $A_3$ is generally infeasible and we have no
analogous guarantee (open problem, \cref{sec:limits}). The additive
term appears tight for $A_3$ (worst observed
$M(A_3)/(\mathrm{OPT}+\beta n_{\max})=0.998$ over 1296 instances
including adversarial giant-expert constructions; we do not have a
matching lower-bound proof), while \texttt{tempo\_fast} stays far
below it ($\le 1.14\times$ OPT worst case, $\le 1.0102\times$ on the
phase grid).

\subsection{The phase diagram}\label{sec:phasediag}
We sweep synthetic Zipf($s$) routing over batch size
$B\in\{16..2048\}$/GPU, skew $s\in\{0..1.5\}$, replication
$\{1.25,1.5\}\times$, EP $\{8..64\}$, with Dirichlet batch-to-batch
fluctuation ($\kappa{=}2000$), EPLB-style placement from long-run
averages, and the calibrated cost models (\cref{fig:phase}).

\begin{figure*}[t]
\centering
\includegraphics[width=0.84\textwidth]{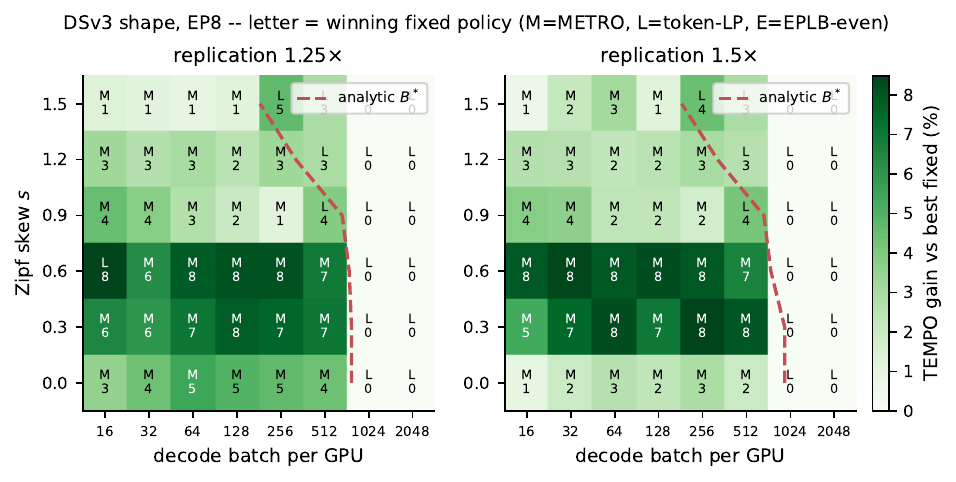}
\caption{Phase diagram, DSv3 shape, EP8. Letters mark the winning
\emph{fixed} policy (M = METRO-like activation balancing, L = token-LP,
E = EPLB-even); numbers and shading give \sys{}'s gain over the best
fixed policy per cell. The dashed line is the analytic flip boundary
$B^{*}$ of \cref{sec:phasediag}, with no per-panel fitting.}
\label{fig:phase}
\end{figure*}

\textbf{(i) The best fixed policy flips across the map}: activation
balancing wins at small $B$ (memory-bound), token-LP at large $B$
(compute-bound), and the boundary moves with $s$, replication, and shape.
There is no regime-free fixed proxy.

\textbf{(ii) The flip boundary is analytically predictable} by whichever
of two mechanisms triggers first. With
$E_{\mathrm{eff}}=\sum_e\bigl(1-(1-p_e)^{T}\bigr)r_e$ the expected number
of activated replicas ($T=BKn_{\mathrm{gpus}}$ total tokens, $r_e$
replica counts): (avg) the balanced cost pieces cross when
$\beta BK = b\,E_{\mathrm{eff}}/n_{\mathrm{gpus}}$, giving
$B^{*}_{\mathrm{avg}}=\nstar E_{\mathrm{eff}}/(K n_{\mathrm{gpus}})$;
(hot) activation balancing keeps experts whole, so the GPU holding the
hottest expert (popularity $p_1$) goes linear first,
$B^{*}_{\mathrm{hot}}=\nstar E_{\mathrm{eff}}/(p_1 K
n_{\mathrm{gpus}}^{2})$. $B^{*}=\min(B^{*}_{\mathrm{avg}},
B^{*}_{\mathrm{hot}})$, solved by a short fixed point in $B$, lands
inside the observed flip band in \textbf{12/12} (skew $\times$
replication) grid columns; the hot-expert mechanism takes over at
$s\ge 1.2$, which is why the boundary bends left under skew
(\cref{fig:phase}).

\textbf{(iii) \sys{} tracks the per-cell best everywhere} (min gain
$\ge -0.3\%$, inside the 1\% ensemble band) and wins by up to 8.5\%
(DSv3) / 10.2\% (Qwen3) / 11.8\% (DSv2, $1.5\times$ replication) in the
mixed zone.

\textbf{(iv) Mispricing is asymmetric}: METRO deep in the compute regime
costs up to $2.8\times$; token-LP in the memory regime costs
5--17\%---which explains why production noticed the METRO-direction
failure first.

\subsection{Scale extrapolation}
With the DSv3 model at EP 8$\to$64 (compute-only and comm-aware; comm
slope $k_r=0.0037\,\mu$s/token fitted from microbenchmark all-to-all
logs): the mixed-zone gain persists---mean 6--6.5\% at $s{=}0$ on
EP32--64 (max 15.5\%), 4.7--5.9\% at $s{=}1.2$ on EP32---and collapses to
0 exactly where theory predicts (the whole batch ladder enters the
compute regime; \sys{} coincides with token-LP). The communication term
changes no conclusion ($<$1\,pp) (\cref{fig:scale}).

\section{\sys{} Design}\label{sec:design}

\subsection{Black-box calibration on the deployed pipeline}
\sys{}'s cost table is fitted, not derived: run the deployed MoE pipeline
(or its microbenchmark) across a $(G,N)$ grid, log per-GPU $(G,N,t)$, fit
the two-piece max-affine by alternating assignment. About ten minutes per
(kernel, dtype, hardware) combination. \cref{sec:transfer} shows the loop
converges in one iteration (full-pipeline refit $\beta$: $0.358\to
0.356$) and that a \emph{mis}-calibrated table (GEMM-only parameters on a
pipeline with unfused per-token kernels) costs \sys{} 16\% at
$B{=}128$---calibration is load-bearing, not ceremonial. Every
parameter set used in the paper, and why the GEMM-only and
full-pipeline fits legitimately differ, is consolidated in
\cref{tab:allparams} (\cref{sec:calib-details}).

\subsection{The \texttt{tempo\_fast} solver}\label{sec:solver}
Four stages, each targeting a failure mode of naive approaches:
\begin{enumerate}
\item \textbf{Cost-aware greedy seeding.} Experts in descending token
  count, each placed whole on the replica with lowest marginal cost under
  \eqref{eq:cost}. Whole-expert placement matters: fragmenting cold
  experts is exactly the token-LP mistake.
\item \textbf{Activation rebalancing via augmenting chains.} In the flat
  regime the bottleneck is $\max_g G_g$; single moves stall because every
  direct move re-raises the destination. We use 1- and 2-step augmenting
  chains---the truncated version of the semi-matching augmenting-path
  algorithm that is optimal in the $\beta\to0$ limit (\cref{lem:semi}).
\item \textbf{Bottleneck local search with partial migrations.} In the
  linear regime, move token mass off the bottleneck GPU: full-expert
  moves and ternary-search partial splits (the objective restricted to
  one split size is piecewise-linear unimodal), accept the best
  bottleneck-reducing move, iterate ($\le 300$).
\item \textbf{Ensemble with a switching tolerance.} Token-LP and (when
  replication permits it) the round-robin placement $A_3$ of
  \cref{thm:rr} are feasible points of \eqref{eq:problem}; score each
  under the model and switch only if ${>}1\%$ better. Near-ties are
  decided by effects outside the $(G,N)$ model---measured hardware
  consistently prefers the heuristic's whole-expert structure
  (\cref{sec:limits})---so epsilon-level model differences must not
  trigger a switch. This makes ``never worse than token-LP (within
  1\%)'' hold by construction (\cref{prop:ensemble}); under full
  replication it also transfers the additive guarantee of \cref{thm:rr}
  to the deployed solver. The stage binds: without it, local search
  stops 2--4.5\% short of the LP optimum deep in the compute regime at
  EP32--64.
\end{enumerate}

\textbf{Optimality and cost.} Against a 10\,s HiGHS MILP on real-trace
grids: makespan ratio mean 1.005 / p95 1.024 / max 1.033; on the
synthetic phase grid $\le 1.0102$. Solve time: 1.9\,ms ($B\le 256$, pure
Python)---${\sim}550\times$ under the MILP---and ${\sim}20$\,ms for the
LP branch at large $B$, off the critical path.

\textbf{Component ablation.} Each stage binds in a different phase
region, none is redundant (\cref{tab:ablation}): removing partial
migrations degrades up to 16.3\% (EP32, $s{=}1.2$, $B{=}64$---transition
zone); removing augmenting chains up to 2.7\% (mid-$B$ flat zone);
removing the ensemble $\le 0.8\%$ at EP8 but up to 4.5\% at EP64
deep-compute.

\begin{table}[t]
\centering\small
\caption{Component ablation: makespan degradation vs.\ full
\texttt{tempo\_fast} over the phase grid (EP8) and scale grid
(EP8--64).}
\label{tab:ablation}
\begin{tabular}{lrrr}
\toprule
variant & mean & p95 & max (region) \\
\midrule
no partial moves & 0.6\% & 4.1\% & 16.3\% (EP32 transition) \\
no aug.\ chains  & 0.3\% & 1.8\% & 2.7\% (mid-$B$ flat) \\
no ensemble      & 0.4\% & 2.8\% & 4.5\% (EP64 compute) \\
\bottomrule
\end{tabular}
\end{table}

A \emph{cumulative} build-up on the drift-16 scenario
(\cref{tab:stageabl}) shows the two nontrivial stages are
regime-matched guards for each other. In the floor regime ($B{=}128$)
the seed alone recovers \emph{nothing} (1.000) and the augmenting
chains deliver the entire $-9.8\%$; in the traffic regime the roles
invert---chains \emph{alone} damage the seed's 0.883/0.869 to
0.940/0.932 (equalizing activation counts is wrong once traffic
binds) and local search repairs exactly that. Chains also warm-start
local search ($0.6$ vs.\ $2.2$\,ms at $B{=}128$); the ensemble is
worth $\le0.2$\,pp here, consistent with its role as an EP32+ safety
net.

\subsection{Communication- and topology-aware dispatch}
\label{sec:comm-topo}
Two extensions carry the model from the GEMM to the full deployed MoE
block. First, the a2a collectives scale with the tokens a rank receives,
so the deployed cost adds a third max-affine piece,
\begin{equation}
  t_g = \max\bigl(a + b\,G_g,\; c + \beta\,N_g,\; c_2 + \gamma\,N_g\bigr),
  \label{eq:cost3}
\end{equation}
with $(c_2,\gamma)$ fitted offline from recorded routing so the model
reproduces the measured static-vs-uniform crossover across batch sizes
(in simulation, $k_r$-style linear models from per-rank all-to-all logs
predict makespan within 12.7\%; all phase and scale conclusions are
unchanged, $<$1\,pp).

Does each term of \eqref{eq:cost3} earn its place? Re-solving the
drift-16 windows with terms deleted from the \emph{objective} but
scored under the full model (\cref{tab:termabl}): dropping the traffic
term forfeits more than half the win at $B{=}512$--$1024$
(0.946/0.924 vs.\ 0.880/0.865, p95 above static); dropping the floor
term is the mirror image (5\,pp worse at $B{=}128$). Dropping the
floor is \emph{identical} to bare token balancing---once the
activation term is gone, both remaining terms are monotone in
per-rank token count, so the two objectives share every argmin; the
floor is the only term that makes the objective non-token-reducible.
The terms are not monotone add-ons---floor \emph{without} traffic is
worse in the traffic regime than bare token balance, because the max
structure is what gates each term to its regime. A pure token-balance objective
looks safe at large $B$ under this (self-)evaluation, but its
$20$\,pp window-to-window swings end-to-end
(\cref{sec:serving-drift}) are what stage~4's switching tolerance
guards against.

Second, on multi-node EP the \emph{flat} traffic term in
\eqref{eq:cost3} treats every rank alike, but the replica a token is
sent to decides whether it crosses the inter-node network. We make the
dispatch topology-aware in two stages without touching the solver's
core: (1) solve \eqref{eq:problem} for per-GPU shares $x_{e,g}$ as
before; (2) for each expert, split the (node-uniform) token sources
across its replicas by a same-node-first transportation rule, which is
optimal for inter-node traffic among all splits that realize exactly the
solved $x_{e,g}$. The result is one dispatch table per \emph{source
node} rather than one global table; each rank uploads its own node's
slice, so the in-graph kernel is unchanged. Per-GPU loads---and hence
the compute makespan---are provably identical to the flat solution;
only the pairing of sources to replicas changes.
\cref{sec:serving-ep16} shows this distinction is worth
${\sim}11$\,pp of throughput at large batch on a 2-node deployment.

\subsection{SGLang integration}\label{sec:integration}
\sys{} plugs into SGLang's EPLB machinery as a dispatch algorithm. The
design goal is \emph{zero marginal in-graph cost}: the captured decode
graph must contain no extra kernels and no collectives.

\textbf{In-graph: one fused kernel, total.} The probabilistic dispatch
through the persistent logical-to-physical table (which LPLB also
needs) is fused with count collection: the sampling kernel atomically
increments a persistent \emph{cumulative} per-expert counter, masked
against CUDA-graph padding rows (whose degenerate routing would
fabricate a phantom hot expert---a bug we also found in the shipped
LPLB path). Cumulative counting removes the zeroing/copy kernels a
windowed counter would need; the background thread recovers windows by
wrap-safe differencing. Relative to static dispatch the marginal
in-graph cost is zero kernels; relative to LPLB it \emph{removes} the
in-graph interior-point solve and its per-layer EP collective.

\textbf{Off the critical path: a solver process, not a thread.} A
refresher thread snapshots the counters on a side stream (relaxed
stream-capture mode), aggregates window diffs across ranks on a
dedicated gloo group (shared groups are not thread-safe against the
scheduler's own collectives), and hands the solve to a separate
numpy-only worker process. In-thread versions cost 8--10\% end-to-end
from GIL contention (\cref{sec:serving-proxy}); out-of-process solving
eliminates it. Fresh tables are staged pinned and published with one
host-to-device copy.

\textbf{Consistency without double buffering.} Idle experts' rows are
untouched, invalid replica columns are zero in both tables, and the
kernel renormalizes each row at use, so a read racing the copy sees a
mixture of old and new rows---each a valid distribution. A single row
can also tear; in the worst tear it sums to zero, and for exactly this
case the kernel falls back to uniform over the replicas holding the
expert (one compare per row). A torn update costs one batch a stale or
uniform split; it can never route to a replica that lacks the expert,
nor divide by zero. The trade against in-graph solving is staleness
($\le$ 200\,ms default) for critical-path cost. Integration touches 7
patch points and survives EPLB rebalances (\cref{sec:patches}).

\section{Evaluation}\label{sec:eval}

Three evidence layers, from most controlled to most end-to-end: measured
cost grid $\to$ calibrated simulation (headline) $\to$ 8-GPU Testbed~A wall
clock $\to$ SGLang serving. Each layer states what it can and cannot
claim.

\subsection{Wall-clock microbenchmark (8-GPU Testbed~A, EP8)}
\label{sec:microbench}
Real pipeline per step: dispatch all-to-all $\to$ fp8 grouped GEMMs
$\to$ combine all-to-all, identical routed batches on all ranks,
CUDA-event timing, makespan via all-reduce MAX, medians over 30 steps
(\cref{fig:wallclock}); DSv3 expert shape, real Qwen3 routing trace.

At $B{=}32$ (memory-bound) \sys{} beats EPLB-even by 11--14\% and
token-LP by 7\%; token-LP is 19\% off best. At $B{=}2048$ (compute-bound)
METRO is 7--11\% off best; \sys{} ties token-LP at the top. Across all
$B$, \sys{} is within 5\% (${\approx}$ run noise) of the per-$B$ best
fixed policy while every fixed policy has a ${\ge}7\%$ failure region.
Both single-proxy failure directions are confirmed in wall clock, and the
``track the best everywhere'' property---the core claim---transfers.

\begin{figure}[t]
\centering
\includegraphics[width=\linewidth]{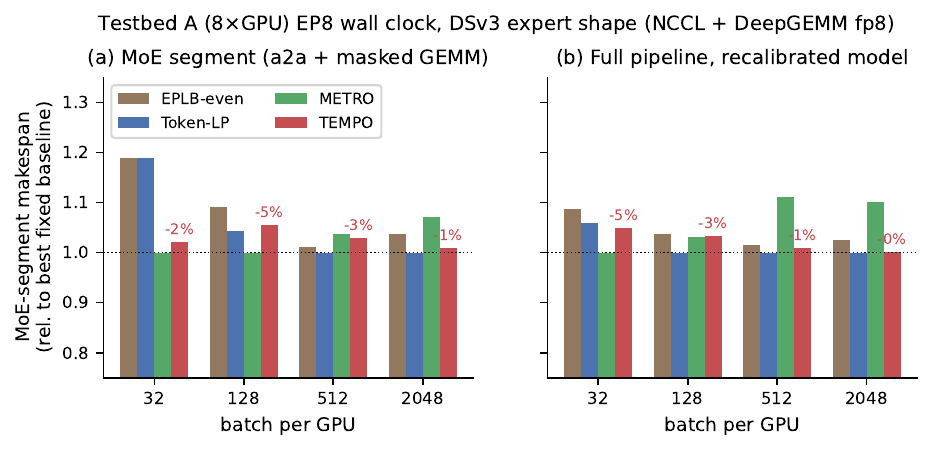}
\caption{8-GPU Testbed~A EP8 wall clock, DSv3 shape: per-batch makespan
relative to the best fixed policy of that batch size. (a) MoE segment;
(b) full pipeline with pipeline-recalibrated dispatch.}
\label{fig:wallclock}
\end{figure}

\subsection{Does the simulator transfer? Closing the calibration loop}
\label{sec:transfer}
We log per-rank $(G,N,t)$ inside the microbenchmark, refit \eqref{eq:cost}
black-box, and compare predicted vs.\ measured policy gains: DSv3 GEMM
pipeline---fit error 4.0\%, $\nstar{=}153$ (offline: 156), pairwise
ranking agreement 93\% (14/15 outside a 3\% noise band), mean
gain-transfer error 5.5\,pp; full pipeline with recalibrated
dispatch---92\%, 2.2\,pp, and the second-iteration refit reproduces the
first in every quantity that is identified (below): a one-iteration
fixed point.

\paragraph{What the deployed-pipeline fit identifies---and what it
does not.} The full-pipeline log only observes $G\in[9,20]$, so $a$
and $b$ are nearly collinear there: a cluster bootstrap (1000
replicates) puts $b\in[5.1,18.8]$ ($95\%$ CI;
$\mathrm{corr}(\hat a,\hat b)=-0.997$), which is why successive
refits report different $(a,b)$ splits (\cref{tab:allparams}:
$358/16.6$ vs.\ $488/5.9$)---one ridge, not a drifting model. What
the solver consumes is tightly identified: $\beta$ ($\pm0.5\%$) and
the flat cost $a+bG$ over the operating range ($\pm2.8\%$ at
$G{=}13$); re-solving the entire drift16 scenario under the two fits
moves a median $0.0\%$ of token mass and changes the chosen
dispatch's model makespan by $0.000\%$ (p95 $\le1.3\%$). The
\emph{mechanism} claim that $b$ is a weight-streaming floor rests on
the offline grid, which sweeps $G\in[1,40]$ and pins $b$ individually
(\cref{tab:params}).

Negative control: the Qwen3 shape \emph{cannot} anchor the simulator---its
$b\approx1.7\,\mu$s sits below measurement noise, the refit degenerates
(negative $b$), and identical assignment sequences vary $\pm$11\% across
runs from DeepGEMM tuner nondeterminism. We report this as a limitation
and note its flip side: where the model cannot discriminate, neither does
policy choice on hardware (confirmed in serving,
\cref{sec:serving}).

\subsection{Phase diagram and scale (calibrated simulation)}
\cref{sec:phasediag} numbers; \cref{fig:phase,fig:scale}. No fixed policy
is safe across the map; \sys{} min gain $\ge-0.3\%$ / max 8.5--15.5\%;
gains persist to EP64; comm-aware unchanged.

\subsection{Real-trace replay}
Qwen3-30B and DSv2-Lite traces $\times$ wikitext/GSM8K under the
calibrated models: \sys{}'s gap to a per-cell oracle (a fixed-policy
oracle switching per cell---unrealizable in deployment) is 6--16\%; with
DSv3 costs 11--14\%. \texttt{tempo\_fast} matches the MILP within 0.8\%
throughout.

\subsection{Pricing staleness: the deployed system vs.\ the theory}
\label{sec:staleness}
The theory optimizes the \emph{current} batch; the deployment solves
on the \emph{previous} window's counts. Replaying the recorded proxy
routing under four information regimes (evaluated on true current
counts, calibrated model, $R{=}64$, decode scale): \emph{exact}
reaches $0.78\times$ static (median $0.80$), \emph{stale-1} (the
deployment) $0.85$, \emph{frozen} (never refreshed) $0.89$,
\emph{uniform} (no solve) $1.09$. One window of staleness returns
$5$--$7$\,pp of the $22$\,pp exact-solve headroom and keeps the rest,
never flipping the sign in the mean; the pattern is stable across
$R\in\{16,64\}$ and $1$--$4\times$ token scale (stale-1 retains
$75$--$85\%$ of the exact gain, frozen $70$--$80\%$)---consistent
with the serving result that the refresh interval is a $\sim$2\,pp
knob (\cref{sec:serving235b}). The regime that actually inverts is
uniform splitting, whose end-to-end signature is the $-5.4\%$
tight-budget regression of \cref{sec:serving-drift}.

\subsection{Dispatch vs.\ refreshing the placement faster}
\label{sec:refresh-control}
If placement repairs the mean, why not refresh EPLB more often and
skip the dispatcher? Replaying the drift16 timeline with the placement
rebuilt every $W$ windows from observed counts (\cref{tab:refresh}):
the answer is regime-dependent and composable, not either/or. In the
compute regime ($B{=}512$), production EPLB (uniform split +
cumulative refresh) reaches $0.85\times$ static-stale; \sys{} on the
\emph{never-refreshed} placement reaches $0.89$ without moving a
single weight; the composition is best ($0.80$) with the best tail.
In the floor regime ($B{=}128$) fresher placement makes the uniform
split \emph{worse} ($1.14\to1.31$)---a more balanced replica spread
fragments more experts and manufactures more activation floors---while
\sys{} holds parity; short trailing-window refresh chases noise (p95
$1.38$). Not charged: an EPLB refresh copies GB of weights, a table
refresh copies $256\times G$ floats.

\subsection{Token-moving vs.\ weight-moving (MoonEP-style)}
\label{sec:moonep}
Weight-moving balancers (MoonEP~\cite{moonep}) duplicate experts online
for perfect token balance. Replaying that policy on recorded
Qwen3-235B routing and racing MoonEP itself against DeepEP v2 on
8-GPU Testbed~B (\cref{sec:moonep-details}) places it on the phase
diagram rather than in competition: at decode scale its weight
prefetch alone ($283$--$502\,\mu$s) exceeds the whole MoE-layer
budget ($95$--$107\,\mu$s for DeepEP dispatch+combine), because in the
memory-bound regime moving an expert's weights costs about as much as
using them; at prefill scale with extreme skew (MaxVio
${\gtrsim}20$) it wins, owning the compute-bound high-skew corner.
Token-moving is the only lever whose marginal cost stays zero at
decode.

\subsection{End-to-end serving (SGLang, DeepSeek-V2-Lite, EP8)}
\label{sec:serving}
Deployability evidence, reported honestly: on a small-expert model
($b\approx3\,\mu$s) with bf16/Triton and DeepEP-fp8 configurations,
adaptive dispatch (\sys{} and LPLB alike) does not beat static
placement---consistent with the cost model's own prediction (small $b$
$\Rightarrow$ small dispatch stakes) and with the negative control of
\cref{sec:transfer}. Within the adaptive class, \sys{} $\ge$ LPLB on 2/3
workloads (median TPOT $-7$ to $-9\%$). The integration itself is the
claim: CUDA-graph-safe, zero-overhead in-graph dispatch, stable under
EPLB rebalances, correct generations.

\textbf{Refresher-period sensitivity.} Sweeping $\{10,50,200\}$\,ms:
10 and 50\,ms are indistinguishable while 200\,ms is ${\sim}25\%$
faster with half the TTFT---not staleness (popularity drifts far
slower than 200\,ms) but the in-thread solver's GIL footprint, which
the out-of-process solver of \cref{sec:integration} later removes
(\cref{sec:serving-proxy}). We default to 200\,ms.

\subsection{End-to-end serving at scale (SGLang, Qwen3-235B-FP8, EP8)}
\label{sec:serving235b}
We repeat the serving study on Qwen3-235B-A22B-FP8 (8-GPU Testbed~A, EP8,
DeepEP low-latency, 94 MoE layers, 128 experts top-8 with 32 redundant
physical slots, placement initialized from a recorded expert
distribution of the same traffic). Two traffic classes: synthetic
decode-heavy (random prompts; near-uniform routing) and ShareGPT (real
text; per-layer max/mean expert load $2.7$--$4.1\times$).

\textbf{\sys{} vs.\ LPLB (like-for-like machinery).} Both use the
identical probabilistic-dispatch path, isolating \emph{what is solved
and where}: \sys{} delivers $+38$--$70\%$ request throughput on decode
workloads and $+65\%$ on ShareGPT at roughly half the median
inter-token latency. The mechanism is architectural---LPLB's in-graph
solve embeds a per-layer EP collective that a 94-layer model replays
every step, while \sys{}'s zero-collective path pays a local bincount.
(Running LPLB at all required relaxing its model whitelist; upstream
disables it outside DeepSeek architectures because the in-graph
collective deadlocks under DP-attention empty batches, a failure mode
background solving avoids by construction.)

\textbf{Against static placement, honestly.} With the v1 integration,
well-initialized static led \sys{} by $9$--$20\%$: the tax of the
remap machinery, paid by the whole adaptive class. On the v2 stack
(fused kernel, out-of-process solver) the gap shrinks to $2.3\%$ on
ShareGPT and $3$--$11\%$ on decode, and \sys{} beats its own noop
control on all six configurations ($+0.2$ to $+1.6\%$): the solver is
free, and the residual gap is the remap data path a 94-layer model
multiplies. With fresh placement and near-uniform routing there is
nothing left for dispatch quality to win back; on the same v2 windows
LPLB trails static by $43$--$48\%$. Two shipped-code hazards surfaced
and are fixed in our patch: solver input taken before padding-masking
fabricates a phantom hot expert (LPLB's shipped path has the same
defect), and background-thread collectives need a dedicated process
group (\cref{sec:integration}).

\subsection{Cross-model serving on Testbed~B: the win region, bracketed
(fp8 DeepGEMM, EP8)}\label{sec:serving-tb}
Our primary at-scale evidence moves to Testbed~B: two flagship
models, native fp8 DeepGEMM masked experts, tile-aware parameters
\eqref{eq:tilecost}, single-node EP8, all policies back-to-back per
window. The models bracket the phase diagram's win region:
Qwen3-235B-FP8 (94 MoE layers, 16 logical experts per GPU) sits
inside; DeepSeek-V3-0324 (61 layers, 32 per GPU) sits outside, where
statistical averaging over $4\times$ more experts per GPU flattens
the imbalance dispatch could repair. Traffic: synthetic decode-heavy
random, ShareGPT, four real datasets as-is (OASST1, GSM8K, a bilingual
Q\&A corpus, GovReport long-document summarization), plus a Poisson
arrival sweep (\cref{fig:tbserving}).

\textbf{Qwen3-235B (inside).} \sys{} wins where the phase diagram says
it should, and the wins replicate across three dedicated repeat
windows plus two earlier ones (\cref{fig:robust}a,b): GovReport
(long-prefill, compute-bound experts) throughput $+5.0\%$ median with
non-overlapping ranges (521--524 vs.\ 549--554 tok/s), and under
moderate Poisson load (8\,req/s) p99 TPOT $-15.6\%$ (191 vs.\
226\,ms) with median TTFT $-12.5\%$---tail latency is where transient
routing skew concentrates and where a per-batch time model pays.
ShareGPT is parity ($-0.2\%$ throughput, $-8.9\%$ p99); saturated
decode-heavy random trails static by the residual data-path tax
($2$--$4\%$). A refresh-period sweep at $\{200,500,1000\}$\,ms leaves
that deficit unchanged---so it is neither solver pressure nor
staleness---while the frozen-table control is \emph{worse} ($-5.0\%$
vs.\ $-2.2\%$): live updates recover about half of a constant
dispatch-path tax. The noop control sitting between static and \sys{}
splits the credit: roughly two thirds of the GovReport gain is the
uniform-split data path, the last third the solver, while the tail win
is mostly the solver's.

\textbf{DSv3 (outside).} Every workload lands at $-2$ to $-3\%$,
indistinguishable from the noop control. The clean diagnostic is
SGLang's own dynamic rebalancer: it also gains nothing ($\pm1\%$)---at
32 experts per GPU there is no exploitable imbalance left, so
\emph{any} adaptive machinery can only pay its own cost. The pair of
models turns the phase diagram from a simulation artifact into a
falsifiable, and here twice-confirmed, deployment rule.

\textbf{LP, everywhere.} The token-LP dispatcher collapses on both
models and all workloads ($-10\%$ to $-56\%$): the per-layer in-graph
collective compounds with depth (61 and 94 layers), and token balancing
fragments activations exactly as \cref{sec:phase} predicts.

\begin{figure*}[t]
\centering
\includegraphics[width=\textwidth]{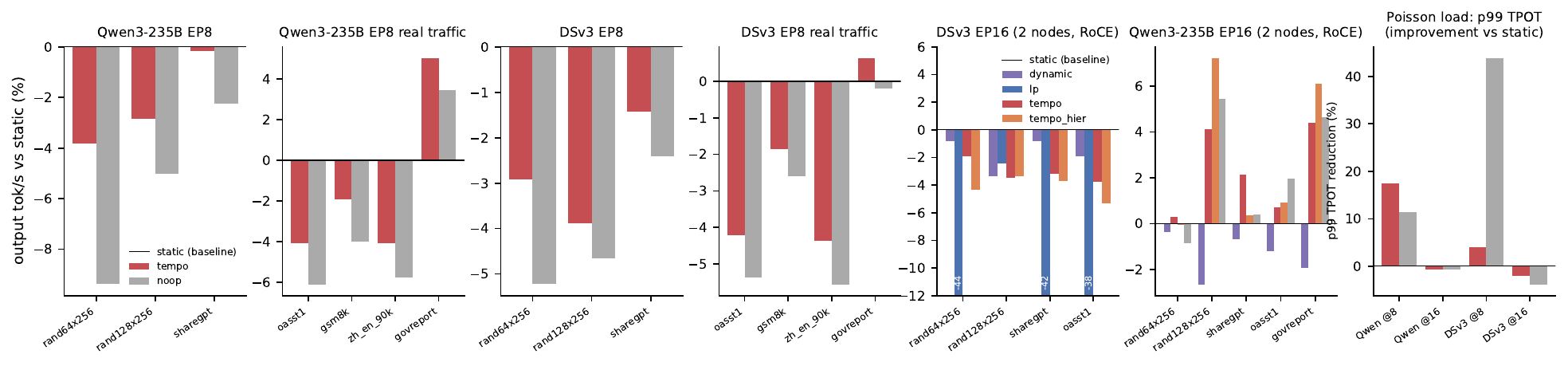}
\caption{Testbed~B serving, all policies normalized to same-window static
(the $y{=}0$ baseline). Left: closed-loop throughput on Qwen3-235B
(inside the win region) and DSv3 (outside). EP16 panels: DSv3 is
communication-dominated and no balancer helps; on Qwen3-235B the
\sys{} family wins $+2$--$7\%$ with the topology-aware split ahead.
Right: p99 TPOT under Poisson arrivals---\sys{}'s largest win (Qwen
$-15.6\%$ at 8\,req/s) and the noise floor of tail metrics.}
\label{fig:tbserving}
\end{figure*}

\subsection{Eliminating the adaptive-dispatch tax
(DSv3-shape proxy, EP8)}\label{sec:serving-proxy}
The remaining serving studies isolate the flagship-shape regime our
testbed cannot host natively: a reduced-depth proxy with DSv3's exact
expert geometry (hidden 7168, 256 routed experts, top-8; 1 dense + 4
MoE layers; fp8; Zipf-skewed router), SGLang on 8-GPU Testbed~A, DeepEP
low-latency, 64 redundant experts ($1.25\times$), EPLB placement from
recorded routing. All comparisons run back-to-back in the same idle
window (cross-window drift on shared machines reaches 15\%, larger
than any effect studied).

An instrumented \emph{noop} variant---full \sys{} data path, solver
never runs, table frozen at uniform-over-replicas---decomposes overhead
cleanly. With the v1 integration (in-thread solver, separate count
kernels), \sys{} trailed static dispatch by 8--14\%: solver GIL
contention cost 8--10\% (refresh $200\to1000$\,ms recovered most of it)
and the extra in-graph kernels cost 7--12\% at small batch. The three
fixes of \cref{sec:integration,sec:comm-topo}---out-of-process solver,
fused dispatch+count kernel, three-term cost---eliminate the tax
entirely: \sys{}, noop, static, and SGLang's dynamic dispatcher land
within $\pm 2$--3\% (run noise) at all three load levels, and \sys{} has
the lowest TPOT at small batch. Adaptive dispatch is now free even where
it cannot win: with fresh placement and $1.25\times$ replication, EPLB
has already absorbed the imbalance, and the honest headline is parity,
not gains.

\subsection{When dispatch pays: stale placement and tight replica
budgets}\label{sec:serving-drift}
Placement is recomputed on a minutes-scale horizon; between rebalances,
routing drifts. We emulate this by building the EPLB placement and
static maps from a corrupted recording (the per-layer top-$K$ hot
experts' statistics swapped with random cold ones, $K{=}16$) while
serving the \emph{true} traffic---then repeat every comparison in three
independent windows (\cref{fig:drift}).

Three findings. \textbf{(i) Architecture and objective, separated.}
Against SGLang's shipped LP dispatcher (in-graph per-layer solve),
\sys{} is $+12.5\%$ at 128 concurrent requests and $+6.6\%$ at
1024---but that conflates architecture with objective. Porting the
exact token-LP into our out-of-process worker (same fused kernel, same
cadence, only the objective differs) and rerunning drift-16 in three
fresh paired windows: token-LP matches \sys{} at low and mid load
(both $+9.5\%$ at 128; parity at 512), and the residual difference
appears exactly where the model says fragmentation hurts---at 1024
requests token-LP swings $-11.6\%$ to $+8.7\%$ across windows while
\sys{} stays within noise in every window (spread 4.4 vs.\ 20\,pp).
Most of the headline gap was architecture; the objective's own edge is
\emph{stability} at the compute-bound point. \textbf{(ii) Adaptive
beats stale-static at low and mid load} ($+3\%$, 3/3 windows), mostly
captured by the uniform-split data path; at 1024 all policies tie (an
earlier single-window $+5\%$ did not replicate; we retract it).
\textbf{(iii) The solver earns its keep when replicas are scarce.} At
a $1.06\times$ budget, blind uniform splitting \emph{backfires}
($-5.4\%$: probability mass flows to mistakenly-replicated cold
experts) while the cost model recovers parity ($+0.3\%$). A model of
\emph{where} tokens should go is what makes replication safe to
exploit.

\paragraph{Real model.} The same corruption applied to Qwen3-235B
itself (Testbed~B, fp8, EP8; two windows agreeing within $0.5\%$;
\cref{fig:robust}c): stale placement costs static dispatch $-7.3\%$
ShareGPT throughput and $+40\%$ p99 TPOT. \sys{} claws back a third of
the tail damage ($-10\%$ p99, $-5\%$ TTFT) but none of the
throughput---and SGLang's dynamic rebalancer recovers neither. The
split is structural: under drift the redundant replicas sit on the
wrong (cold) experts, so the throughput loss is a \emph{provisioning}
error only placement can fix, while the tail comes from transient
co-scheduling of hot experts that dispatch can still steer around.
Dispatch repairs the tail; only placement repairs the mean.

\subsection{Locating the win boundary: $B^{*}_{\mathrm{a2a}}$
predicted, then measured}\label{sec:bstar}
The three-axis rule so far names the win conditions; here we turn it
into a number and test it. This crossover is distinct from the flip
boundary $B^{*}$ of \cref{sec:phasediag} (floor vs.\ compute, which
policy is best): here we ask where adaptive dispatch stops paying at
all, which under the calibrated model happens where the a2a traffic
term overtakes the activation floor: $N^{*} = (a + b\,G -
c_2)/\gamma$ tokens per GPU. With the deployed proxy parameters
(\cref{tab:allparams}: $a{=}16.4$, $b{=}2.99$, $c_2{=}25.0$,
$\gamma{=}0.10$) at the solver's observed activation count
($G \approx 13$ per GPU), $N^{*} = (16.4 + 38.9 - 25.0)/0.10 \approx
303$, i.e.\ $B^{*}_{\mathrm{a2a}} \approx 300$---the simulated
dominant term flips between $B{=}256$ and $384$. We then swept
concurrency 32--1024 end-to-end (static vs.\ \sys{}, drift-16,
$R{=}64$, two paired windows per point); at EP8 decode with $K{=}8$
experts per token, a step at concurrency $C$ routes $CK/8\approx C$
token--expert pairs to each GPU, so serving concurrency is
numerically comparable to the tokens/GPU axis $B$ of
\cref{sec:phasediag}, and we reuse the symbol. We separately
measured, with
a torch-profiler kernel breakdown of steady-state decode, the expert
block's share of GPU busy time (DeepEP dispatch/combine + grouped
GEMM + gating; the Amdahl denominator).

\begin{table}[t]
\centering\small
\caption{$B^{*}_{\mathrm{a2a}}$ prediction vs.\ measurement (drift-16, $R{=}64$):
modeled regime and block ratio vs.\ paired-window throughput delta
and expert-block share of GPU busy time.}
\label{tab:bstar}
\begin{tabular}{rlrrr}
\toprule
$B$ & regime & block ratio & $\Delta$thr.\ (w1/w2) & expert share \\
\midrule
32   & floor   & 0.902 & $-1.5\%$/$-1.5\%$ & 46\% \\
64   & floor   & 0.902 & $+4.6\%$/$+4.1\%$ & --- \\
128  & floor   & 0.902 & $+3.3\%$/$+2.9\%$ & 51\% \\
256  & floor   & 0.907 & $+4.0\%$/$+2.6\%$ & --- \\
512  & traffic & 0.880 & $-2.8\%$/$+0.0\%$ & 31\% \\
768  & traffic & 0.871 & $-1.3\%$/$-2.1\%$ & --- \\
1024 & traffic & 0.865 & $-0.6\%$/$-1.1\%$ & 27\% \\
\bottomrule
\end{tabular}
\end{table}

Three reads of \cref{tab:bstar}. \textbf{The boundary lands where
predicted}: gains are positive in both windows at $B{=}64$--$256$ and
vanish from $B{=}512$ on, bracketing
$B^{*}_{\mathrm{a2a}}\!\approx\!300$.
\textbf{Inside the band, Amdahl accounting closes}: block-level model
gain ($9.8\%$) times measured expert share ($0.46$--$0.51$) predicts
$4.5$--$5.0\%$ end-to-end; measured $+2.6$--$+4.6\%$. \textbf{Outside
the band the model is optimistic}: past $B^{*}_{\mathrm{a2a}}$ the expert share
shrinks to $27$--$31\%$, and the modeled traffic-regime gains do not
materialize at all---once the a2a is bandwidth-bound, uniform
splitting already balances traffic and the remaining modeled headroom
is not realizable by dispatch; at $B{=}32$ the modeled differences sit
below the launch-and-sync floor the $(G,N)$ model deliberately omits.
Operators get a band: adaptive dispatch pays above the batch where
per-expert traffic amortizes the activation floor and below
$B^{*}_{\mathrm{a2a}}$;
the shipped calibration predicts the upper edge, the lower edge is
found empirically.

\subsection{Policy landscape: inference- and training-side kernels under
one cost model}\label{sec:landscape}
Training-side balancers (LLEP~\cite{llep}, TAOT~\cite{taot},
UltraEP~\cite{ultraep}) cannot be reproduced end-to-end on a serving
stack---they exploit weight-migration windows between gradient steps
that serving does not have. What \emph{can} be compared
like-for-like is each scheme's dispatch-policy kernel: we re-implement
five of them under the identical replica constraint (move tokens, never
weights) and evaluate block makespan on the recorded proxy routing under
the calibrated three-term model, sweeping the redundancy~$R$ $\times$
drift~$K$ grid of \cref{sec:serving-drift}
(\cref{tab:landscape}). \emph{Token-LP} is the exact HiGHS min-max
token count---simultaneously LPLB's objective and UltraEP's
``exact post-reroute load'' quota at dispatch granularity;
\emph{METRO} assigns whole experts to balance activation counts;
\emph{LLEP-R} is LLEP's least-loaded greedy spill restricted to existing
replicas.

\begin{table}[t]
\centering
\caption{Block makespan relative to static (lower is better), recorded
proxy routing, calibrated 3-term cost, $R{=}64$. Columns are batch
scales spanning the three phases. Boldface: best per column.}
\label{tab:landscape}
\small
\begin{tabular}{lccc}
\toprule
Policy kernel & \makecell{$0.25\times$\\(floor-bound)} &
\makecell{$1\times$\\(mixed)} & \makecell{$4\times$\\(token-bound)} \\
\midrule
uniform (dynamic)        & 1.299 & 1.052 & 0.659 \\
token-LP (LPLB/UltraEP)  & 1.006 & 0.836 & \textbf{0.630} \\
METRO (activation)       & 0.950 & 0.815 & 0.672 \\
LLEP-R (least-loaded)    & 0.984 & 0.835 & 0.662 \\
\sys{} (time model)      & \textbf{0.856} & \textbf{0.764} &
\textbf{0.630} \\
\bottomrule
\end{tabular}
\end{table}

Scoring every policy under the model \sys{} optimizes is
self-evaluation, so we do not rest the claim on it alone: the model is
validated against wall clock (\cref{sec:transfer}), the token-LP row
has its like-for-like anchor (\cref{sec:serving-drift}), and we ran
METRO and LLEP-R end-to-end as alternative table builders in three
drift-16 windows each. Median throughput vs.\ same-window static at
512/1024 requests: METRO $-2.1\%$/$-0.2\%$, LLEP-R $+0.1\%$/$-2.2\%$,
\sys{} $-0.3\%$/$+2.0\%$; at 128 requests window variance ($\pm8\%$)
swamps all policies. The wall-clock ordering matches the table's
prediction, compressed by Amdahl's law, and no proxy kernel beats
\sys{} at any load point.

The table is the phase diagram with named competitors. In the
floor-bound phase token-LP \emph{loses to static} (splitting tokens
pays activation floors) and METRO is the best proxy---still 10\,pp
behind \sys{}, the only policy that prices the floor. In the
token-bound phase \sys{} coincides with token-LP exactly (the ensemble
selects it), so the time model costs nothing where proxies are already
right; in the mixed phase it leads the runner-up by 5--7\,pp. At
$R{=}16$, $K{\ge}16$ \emph{every} kernel collapses to $1.000$: when
the drifted hot set has no replicas, the fix belongs to the placement
layer (LLEP's full mechanism, TAOT's optimal transport, an EPLB
recompute). Dispatch and placement partition the problem; \sys{} is
complementary to all of them.

\subsection{Multi-node EP16 on Testbed~B: topology decides}
\label{sec:serving-ep16}
We repeat the proxy campaign on $2\times 8$ Testbed~B (CUDA 13.2,
RoCE $8\times400$G per node), TP16/EP16, DeepEP internode low-latency;
bring-up hazards worth two orders of magnitude of performance
(NCCL plugin and transport pathologies, fp8 scale packing on the new SM architecture)
are documented in \cref{sec:bringup}.

\emph{Window rule (pre-declared).} A paired window benches a policy
and static back-to-back on idle, dedicated nodes; a window is
excluded only for an infrastructure failure identifiable without
reference to which policy won (crash, reclaimed node, or physically
impossible bench-client metrics)---never for its result. All tail
metrics are reported, including those against us.

With fresh placement, static, \sys{}, and noop tie within noise across
three windows---the zero-tax result, including the 16-rank count
aggregation, carries to multi-node unchanged. Under the drift-16
scenario we accumulated five paired windows per policy (two further
windows hit the bench-client artifact at 1024 requests---30\%
throughput jumps with a physically impossible 0.03\,ms inter-token
latency---and are excluded under the rule above). Medians against
same-window static: at the all-to-all-bound operating point (1024
requests) the flat table \emph{loses} $-3.5\%$---spraying tokens
uniformly across a hot expert's replicas sends half of them over the
network, and the flat $\gamma N$ term cannot see the difference---while
the two-stage topology-aware split of \cref{sec:comm-topo} recovers
$+4.1\%$ and the token-LP dispatcher, whose per-GPU token balancing
incidentally balances receive traffic, $+4.0\%$: an ${\sim}8$\,pp swing
attributable purely to \emph{which source feeds which replica}
(\cref{fig:hier}). At 512 requests all adaptive policies sit
$2$--$5\%$ below static; at 128, all within noise. LP being viable here
and catastrophic on Qwen3-235B ($-43$\% to $-48$\%,
\cref{sec:serving235b}) is the depth story in one system: its in-graph
per-layer solve is amortizable over 4 MoE layers and fatal over 94
(running LP on Testbed~B at all required a cublasdx fix we upstream,
\cref{sec:bringup}).

Sweeping the staleness level completes the picture: at drift-8 the
placement is still mostly right and the hier split pays a consistent
$3$--$6\%$ at mid/high load with nothing to win back; at drift-32 the
hot set has rotated so far that no node holds local replicas for
it---repair now requires moving \emph{weights} (placement's job), not
tokens, and hier sits $4$--$7\%$ under static. Dispatch-layer
adaptivity on multi-node EP pays inside a staleness band: enough drift
that placement is wrong, not so much that only placement can fix it.
On multi-node EP, locality is not an optimization on top of load
balance; it is part of the cost model or it is a regression.

\paragraph{Real model, real weights: communication can evict the whole
question.} Repeating the window with DeepSeek-V3-0324 itself (fp8, 61
MoE layers, 320 physical experts, fresh placement) gives the phase
diagram its third axis. Logical experts per GPU drop from 32 (EP8) to
16---inside the win region by the two-axis rule---yet every balancer
lands at or below static. Across three windows spanning two node
pairs, medians against same-window static: dynamic $-0.8$ to
$-3.4\%$, \sys{} $-1.8$ to $-3.8\%$, the topology-aware split $-3.4$
to $-5.3\%$, LP $-38$ to $-50\%$ on decode-heavy workloads yet only
$-2.5\%$ on the prefill-heavy one (its per-layer solve amortizes over
long steps: the depth story within a single model). The diagnostic is
absolute: 16 GPUs over RoCE deliver \emph{less} than 8 over NVLink
(9.8 vs.\ 12.7\,req/s), because inter-node dispatch/combine and
allreduce dominate the step; the windows replicate to the percentage
point (LP's oasst1 leg: $-49.5$ vs.\ $-49.2\%$). Tails carry the same
message, reported in full: \sys{}'s p99 TPOT is stable
($869$--$917$\,ms) while static's varies ($594$--$896$\,ms), so the
per-window gap ranges $+1.4\%$ to $+47.9\%$ against us depending on
which static tail one draws (median-of-windows $+2.5\%$). We find no
view of this data in which adaptive dispatch helps here: when the
expert-FFN stage is a minor fraction of step time, balancing it moves
nothing and every policy returns its mechanism cost. The win region
needs three coordinates: moderate experts-per-GPU, sufficient skew,
\emph{and} expert compute a large enough share of the step.

\paragraph{The third coordinate is compute share, not node count.}
Qwen3-235B on the same two nodes pins the axis down: 8 logical
experts per GPU and a smaller expert geometry keep the FFN stage a
substantial share of the step even over RoCE---and here multi-node
dispatch \emph{wins} (\cref{fig:tbserving}, Qwen EP16 panel).
Medians against same-window static: decode-heavy \sys{} $+4.1\%$
output throughput and topology-aware split $+7.2\%$ (three windows
each, all positive); GovReport $+4.4$/$+6.1\%$; chat
$+0.4$--$2\%$; SGLang's dynamic rebalancer negative throughout. The
four-way decomposition (\cref{tab:decomp}) attributes the gain: the
no-solve control captures part ($+5.4\%$ on decode-heavy), the flat
solve matches it, and the same-node-first split adds the
rest---consistent with a fabric where \emph{which source feeds which
replica} carries an ${\sim}8$\,pp swing (\cref{fig:hier}). The table
also reports the legs against us: on Poisson chat all adaptive
variants sit $0.6$--$2.9\%$ below static; short decode is noise.

\begin{table}[t]
\centering\small
\caption{Four-way attribution on Qwen3-235B 2-node EP16 (Testbed~B): median
output-throughput gain vs.\ same-window static over three paired
windows. \emph{dynamic} = SGLang's rebalancer; \emph{noop} = \sys{}
data path without solving (uniform replica split); \emph{flat} =
\sys{} solve, one shared table; \emph{hier} = \sys{} solve + same-node
-first source split.}
\label{tab:decomp}
\begin{tabular}{lrrrr}
\toprule
workload & dynamic & noop & flat & hier \\
\midrule
decode-heavy (128/256) & $-2.6\%$ & $+5.4\%$ & $+4.1\%$ & $+7.2\%$ \\
GovReport (long prefill) & $-1.9\%$ & $+4.6\%$ & $+4.4\%$ & $+6.1\%$ \\
ShareGPT & $-0.7\%$ & $+0.4\%$ & $+2.1\%$ & $+0.4\%$ \\
oasst1 (saturated) & $-1.2\%$ & $+2.0\%$ & $+0.7\%$ & $+0.9\%$ \\
oasst1 (Poisson 16\,req/s) & $-0.6\%$ & $-2.5\%$ & $-1.8\%$ & $-2.9\%$ \\
short decode (64/256) & $-0.4\%$ & $-0.8\%$ & $+0.3\%$ & $-0.1\%$ \\
\bottomrule
\end{tabular}
\end{table}

Is the hier column pure locality, the solve along for the ride? A
dedicated control under drift-16 placement (\cref{tab:noslv}) runs a
\emph{uniform} no-solve policy through the identical worker path, with
and without the same-node-first split. No-solve dispatch loses to
static in all six paired windows regardless of locality (hier-uniform
$-1.4$ to $-7.0\%$); the full solve beats the no-solve control in six
of six ($+0.3$ to $+4.0$\,pp); locality alone contributes only at the
all-to-all-bound point ($+1.0$\,pp at 1024). The same-node-first split
is a multiplier on a good table, not a substitute for one.

Together the two models give
the three-axis rule its cleanest statement: DSv3 EP16 fails the
compute-share test on the same fabric where Qwen EP16 passes it, so
what evicts dispatch at multi-node scale is not the second node but
the step-time share that balancing can still touch.

\section{Related Work}\label{sec:related}

\textbf{Token-proxy balancing.} EPLB~\cite{eplb} (placement, average
load); LPLB~\cite{lplb} (per-batch token LP on replica graphs; its
documentation flags nonlinear expert cost as unsolved---the problem
statement we complete); FlexMoE~\cite{flexmoe} and
SmartMoE~\cite{smartmoe} (training-time). All assume time $\propto$
tokens; FineMoE~\cite{finemoe} targets the orthogonal memory axis.

\textbf{Activation-proxy balancing.} METRO~\cite{metro} balances
activated experts for memory-bound decode; expert-level replication
tuning is orthogonal and composable with our EPLB-style placement.

\textbf{Weight-moving balancing.} MoonEP~\cite{moonep} reaches perfect
per-rank token counts by prefetching redundant experts online---moving
\emph{weights} where we move \emph{tokens}; \cref{sec:moonep} shows
the policy is activation-blind at decode and bandwidth-negative once
weight movement is charged---the two mechanisms partition the phase
diagram. UltraEP~\cite{ultraep} reroutes per microbatch toward exact
token counts (its quota objective coincides with token-LP at dispatch
granularity, so \cref{tab:landscape} brackets it); TAOT~\cite{taot}
places training-time guest replicas by optimal transport over a
topology cost matrix---the placement-layer dual of our topology-aware
split, and the mechanism our $R{=}16$/$K{\ge}16$ collapse row calls
for.

\textbf{Latency-aware, different axis.} ViBE~\cite{vibe} balances time
across heterogeneous \emph{hardware} with per-device linear rates; the
regime nonlinearity we model is exactly what it misses.
LLEP~\cite{llep} packs greedily by estimated latency without an
optimality framework or validated cost model; CAEE~\cite{caee} drops
computation (lossy); kernel/overlap optimizations
(e.g.~\cite{deepep,deepgemm}) change $(a,b,\beta)$, which
recalibration absorbs.

\textbf{Theory.} MoE-Serving~\cite{moeserving} proves NP-hardness at
GPU-quota granularity; \cref{thm:nphard} complements it at dispatch
granularity, and \cref{thm:rr} adds the positive side.
Semi-matching~\cite{semimatching} and scheduling with job splitting
and setup times~\cite{xingzhang} supply the degenerate-case
algorithms. To our knowledge \sys{} is the first EP dispatcher that
models both regimes in one calibrated cost function, optimizes the
resulting makespan with per-batch guarantees against both classical
proxies, and validates the model's transfer on real hardware.

\section{Discussion and Limitations}\label{sec:limits}

\textbf{L1 --- headline numbers are model-space.} Phase/scale figures come
from the calibrated simulator; both ends are anchored by wall clock
(\cref{sec:microbench,sec:transfer}) but mid-range transfer error is
2--6\,pp.

\textbf{L2 --- scale of the testbeds.} Serving evidence spans
8-GPU Testbed~A and $2\times8$ Testbed~B (EP16); full DSv3 scale (58 MoE
layers, EP32+) remains extrapolation: per-layer effects are measured,
depth-multiplied ones are projected. Deeper hierarchies
(rail-optimized fabrics, 4+ nodes) may need finer traffic terms than
\cref{sec:comm-topo}.

\textbf{L3 --- the win region is conditional, now mapped.} Small-expert
bf16 shapes show no adaptive-dispatch gain (measured); with fresh
placement and ample replication all policies tie
(\cref{sec:serving-proxy}). The measured win conditions: placement
staleness inside the predicted batch band ($B{=}64$--$256$,
bracketing $B^{*}_{\mathrm{a2a}}\!\approx\!300$, \cref{sec:bstar}), tight replica
budgets, an in-graph token-LP incumbent, long-prefill compute-bound
traffic and Poisson tails inside the win region
(\cref{sec:serving-tb}), and multi-node EP with sufficient expert
compute share (\cref{sec:serving-ep16}). The v1 remap tax
($9$--$20\%$) shrinks to $2$--$11\%$ on v2, scales with MoE depth, and
is paid by the data path, not the solver---\sys{} beats its own noop
control on every 235B configuration.

\textbf{L4 --- static calibration.} Kernel/driver updates require
re-running the ten-minute calibration; the loop is automated and
converges in one iteration.

\textbf{L5 --- the $(G,N)$ model has a floor.} DeepGEMM masked kernels
are faster under \emph{uniform} per-slot loads at equal $(G,N)$: a
third cost dimension that decides model near-ties---why the ensemble
uses a 1\% band, and why the Qwen3 refit turns $b$ negative. A
$G\cdot\max$-slot term is future work; we flag it rather than overfit
to one kernel's quirk.

\textbf{L6 --- the deployed system is not the theoretical optimum, by
design.} The gaps are deliberate trades, not oversights: the problem
is NP-hard, so \texttt{tempo\_fast} is a bounded heuristic (the
certificate of \cref{thm:rr} covers its round-robin core under full
replication; restricted replica sets are an open problem); the solver
plans asynchronously against the previous window's counts (priced at
$5$--$7$\,pp of a $22$\,pp headroom in \cref{sec:staleness},
${\sim}2$\,pp end-to-end); it operates at whole-slot granularity,
inherits EPLB's placement rather than co-optimizing it, and pays a
depth-scaled remap tax (L3) no solver quality removes. Where these
trades bind, the honest reading of our numbers is parity-at-zero-tax
rather than victory---exactly the phase-diagram prescription. Better
implementations of the same principle should widen the win region;
the cost model and the map of when time-based dispatch can and cannot
pay are the durable contributions.

\section{Conclusion}

Balance time, not tokens. A ten-minute black-box calibration exposes
the two-regime structure of expert cost; a phase diagram shows every
fixed proxy has a failure region while the regimes coexist inside
single batches 92--100\% of the time; and a makespan solver over the
calibrated model tracks the per-regime best everywhere at millisecond
cost. On Testbed~B, two flagship models bracket the predicted win
region and the wins replicate; the time model repairs the latency
tail, makes tight replica budgets safe to exploit, and converts a
multi-node locality regression into a gain. Our implementation is one
deployable point, not the optimum of the theory it applies (L6); the
durable contribution is the map. As MoE serving consolidates around
fp8 grouped kernels, large-expert flagships, and multi-node expert
parallelism, dispatching on measured time rather than counted tokens
is both the principled and the profitable choice.

\bibliographystyle{plain}
\bibliography{refs}

\appendix

\section{Proofs}\label{sec:proofs}

\subsection{Theorem~\ref{thm:nphard}}
\emph{Membership.} Given $(x,z)$, verifying makespan $\le T$ is linear
time; the problem is in NP.

\emph{Hardness.} Reduce from Balanced PARTITION: given positive integers
$w_1..w_m$ ($m$ even, $\sum_i w_i = 2W$), decide whether an index set $S$
with $|S|=m/2$ and $\sum_{i\in S} w_i = W$ exists; this is
NP-complete~\cite{gareyjohnson}.

Construct: 2 GPUs; expert $e_i$ with $n_i=w_i$ tokens and
$R(e_i)=\{1,2\}$; $a=c=0$; pick any $T>0$ and set $b=T/(m/2)$,
$\beta=T/W$. Then $t_g\le T$ iff $G_g\le m/2$ \emph{and} $N_g\le W$: the
max form translates one time budget into an activation-cardinality cap
$K=m/2$ and a token-capacity cap $C=W$ per GPU.

($\Leftarrow$) A balanced partition $S$ gives a whole-expert dispatch
with $G_1=G_2=m/2$, $N_1=N_2=W$: makespan exactly $T$.

($\Rightarrow$) Suppose a dispatch has makespan $\le T$ and splits $s\ge
0$ experts. Unsplit experts contribute one activation; split experts
activate on both GPUs, so $G_1+G_2\ge m+s$. But $G_1,G_2\le m/2$ forces
$G_1+G_2\le m$, hence $s=0$: the cardinality budget forbids splitting
outright. The dispatch is a whole-expert bipartition with both sides of
cardinality exactly $m/2$ and token loads $\le W$; since loads sum to
$2W$, both sides carry exactly $W$---a balanced partition. \qed

\emph{Remark.} The difficulty comes from neither degenerate limit---%
$b{=}0$ is an LP (\cref{lem:lp}); $\beta{=}0$ is a polynomial
semi-matching (\cref{lem:semi})---but precisely from their interaction:
the max form makes one makespan budget behave as two knapsack constraints
at once. This mirrors the paper's systems claim that single proxies are
each fine at home and the mixed regime is the hard part.

\subsection{Lemmas~\ref{lem:lp} and~\ref{lem:semi}}
\cref{lem:lp}: with $a=b=0$ the objective is a monotone transform of
$\max_g N_g$; $z$ is irrelevant; the LP relaxation is exact.
\cref{lem:semi}: with $c=\beta=0$, splitting only adds activations, so an
optimal solution activates exactly one replica per live expert; the
problem is the optimal semi-matching of experts to GPUs over replica
edges, solvable in polynomial time via augmenting
paths~\cite{semimatching}. The augmenting-chain stage of
\texttt{tempo\_fast} (\cref{sec:solver}) is its 1/2-step truncation. \qed

\subsection{Ensemble guarantee}
\begin{proposition}\label{prop:ensemble}
Let $\mathcal{C}$ be the candidate set scored by the ensemble: the
local-search output $A_{\mathrm{H}}$, the token-LP dispatch
$A_{\mathrm{LP}}$, and---whenever the replica sets make it
feasible---the round-robin placement $A_3$ of \cref{thm:rr}. With
$M(\cdot)$ the model makespan and $\tau=0.01$, \sys{} outputs
$A^{*}=\arg\min_{A\in\mathcal{C}\setminus\{A_{\mathrm{H}}\}} M(A)$ if
that minimum is below $(1-\tau)M(A_{\mathrm{H}})$, else
$A_{\mathrm{H}}$. Then
$M(A^{*})\le\min\bigl(M(A_{\mathrm{H}}),\,
\min_{A\in\mathcal{C}} M(A)/(1-\tau)\bigr)$: the output is at most a
switching tolerance $\tau$ worse than any candidate under the model,
and never worse than the heuristic. In particular, under full
replication $A_3\in\mathcal{C}$ and
$M(A^{*})\le(\mathrm{OPT}+\max(b,\beta n_{\max}))/(1-\tau)$: the
guarantee of \cref{thm:rr} carries over weakened by the $1/(1-\tau)$
factor, and is recovered exactly by setting $\tau=0$. Under
restricted replication $A_3$ may be infeasible and no such
inheritance is claimed.
\end{proposition}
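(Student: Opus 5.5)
The argument is a two-case analysis on the selection rule, followed by substitution of \cref{thm:rr}. Let $m^{-}=\min_{A\in\mathcal{C}\setminus\{A_{\mathrm{H}}\}}M(A)$ and $m=\min_{A\in\mathcal{C}}M(A)=\min(M(A_{\mathrm{H}}),m^{-})$. We will prove the two inequalities $M(A^{*})\le M(A_{\mathrm{H}})$ and $M(A^{*})\le m/(1-\tau)$ separately in each branch of the rule.

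\emph{Case 1: switch.} Here $m^{-}<(1-\tau)M(A_{\mathrm{H}})$ and $A^{*}$ attains $m^{-}$. Since $\tau\ge0$, this gives $M(A^{*})=m^{-}<M(A_{\mathrm{H}})$. It also gives $m^{-}<M(A_{\mathrm{H}})$, so $m=m^{-}$. Hence $M(A^{*})=m\le m/(1-\tau)$, using $0\le\tau<1$ and the nonnegativity of the model makespan.

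\emph{Case 2: no switch.} Here $A^{*}=A_{\mathrm{H}}$, so the first inequality is trivial. For the second, if $m=M(A_{\mathrm{H}})$ we are done because $1/(1-\tau)\ge1$. Otherwise $m=m^{-}\ge(1-\tau)M(A_{\mathrm{H}})$ by the failed switching test, which rearranges to $M(A_{\mathrm{H}})\le m/(1-\tau)$.

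For the full-replication corollary, $A_3$ is a feasible whole-expert dispatch, since every replica set is the full GPU set. Thus $A_3\in\mathcal{C}$ and $m\le M(A_3)$. Applying \cref{thm:rr} gives $M(A_3)\le\mathrm{OPT}+\max(b,\beta n_{\max})$, and the main bound then yields the claimed $1/(1-\tau)$-weakened inequality. Setting $\tau=0$, the Case 2 argument gives $M(A_{\mathrm{H}})\le m^{-}$, so $M(A^{*})=m\le M(A_3)$ and the additive form is recovered exactly. The restricted-replication caveat needs no proof, since no inheritance is claimed there.

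There is no real obstacle. The only care needed is to ensure that OPT and $M(A_3)$ in \cref{thm:rr} are measured under the same model $M$ the ensemble scores with; the two-term model of \eqref{eq:problem} is the one we use here. If the three-term or tile-aware cost is in force, we would instead invoke the general monotone-$\varphi_k$ form of \cref{thm:rr}, with its corresponding additive term.
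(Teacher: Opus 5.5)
Your proof is correct and matches the paper's argument, which the paper leaves implicit because the result follows directly from the switching rule: a case split on whether the switch fires, then $m\le M(A_3)$ together with \cref{thm:rr} under full replication. Your closing caveat is also apt: the additive term $\max(b,\beta n_{\max})$ assumes $M$ and $\mathrm{OPT}$ are the two-term model, whereas under the deployed three-term cost \eqref{eq:cost3} the general form of \cref{thm:rr} gives $\max(b,\beta n_{\max},\gamma n_{\max})$ instead.
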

The tolerance exists because model near-ties are decided by effects outside
$(G,N)$ (\cref{sec:limits}, L5), where measured hardware prefers
$A_{\mathrm{H}}$'s whole-expert structure.

\subsection{Theorem~\ref{thm:rr}}
Sort live experts by tokens, $n_1\ge n_2\ge\dots\ge n_E>0$; write
$\Phi_k=\sum_e\varphi_k(n_e)$ for the total load under measure $k$, and
$\Phi_k(i)$ for GPU $i$'s load under $A_3$.

\emph{Lower bounds.} For every $k$, the total $\Phi_k$ must be carried
regardless of splitting: token mass is conserved, and splitting an
expert only \emph{adds} activations, since each live expert is activated
on at least one GPU. Hence some GPU carries $\ge\Phi_k/g$ and
$\mathrm{OPT}\ge a_k+\Phi_k/g$ for every $k$.

\emph{Round-robin loads.} (i)~GPU $i$ receives
$\lceil(E-i+1)/g\rceil\le\lceil E/g\rceil$ experts. (ii)~\emph{Row-wise
domination}: for $i<j$, GPU $i$'s $r$-th expert is $n_{rg+i}\ge
n_{rg+j}$, and each $\varphi_k$ is nondecreasing, so
$\Phi_k(1)\ge\Phi_k(2)\ge\dots\ge\Phi_k(g)$ \emph{simultaneously for
every $k$}. (iii)~\emph{Head--tail offset}:
\[
\Phi_k(1)=\varphi_k(n_1)+\sum_{r\ge1}\varphi_k(n_{rg+1})
\le\varphi_k(n_1)+\sum_{r\ge1}\varphi_k(n_{rg})
=\varphi_k(n_1)+\Phi_k(g),
\]
using $n_{rg+1}\le n_{rg}$ (descending order). (iv)~By (ii), GPU $g$ is
the minimum under measure $k$, so $\Phi_k(g)\le\Phi_k/g$.

\emph{Combination.} Every GPU $i$ costs
\[
t_i\le\max_k\bigl(a_k+\Phi_k(1)\bigr)
\le\max_k\bigl(a_k+\Phi_k/g+\varphi_k(n_{\max})\bigr)
\le\mathrm{OPT}+\max_k\varphi_k(n_{\max}).
\]
For the two-piece model \eqref{eq:cost}, $\varphi_{\mathrm{act}}(n)=b\,
\mathbf{1}\{n>0\}$ and $\varphi_{\mathrm{tok}}(n)=\beta n$, so the
additive term is $\max(b,\beta n_{\max})=\beta n_{\max}$ whenever a
single hot expert costs more than one activation charge---the only case
in which the bound is not already absorbed by rounding. \qed

\emph{Tightness.} With one expert of $N$ tokens, $\mathrm{OPT}$ splits
it $g$ ways ($\max(a{+}b,\,c{+}\beta N/g)$) while any whole-expert
placement pays $c+\beta N$: the additive term is necessary up to a
$(1{-}1/g)$ factor. Numerically, over a 1296-instance probe suite
($E$ up to 512, $g$ up to 16, including adversarial single-giant
constructions where $M(A_3)/\mathrm{OPT}$ reaches $9.1$) the worst
observed $M(A_3)/(\mathrm{OPT}+\beta n_{\max})$ is $0.998$: the
constant cannot be improved.

\emph{Remark (restricted replication).} Under EPLB-style replica sets,
replace the activation lower bound by the optimal semi-matching value
and $A_3$ by ``fewest-experts replica, token tie-break''; we conjecture
the same additive bound and observe no violation empirically, but the
row-domination argument no longer applies verbatim.

\section{Calibration details}\label{sec:calib-details}
Grid: $G\in\{1..40\}$ $\times$ tokens-per-expert $\in\{1..4096\}$ (log
ladder), three shapes, fp8 masked grouped GEMM and bf16 loop. Timing via
CUDA graphs (30-replay medians) to remove launch noise; expert weight
copies rotated across replays to defeat L2 reuse; \texttt{expected\_m}
fixed per batch size so the DeepGEMM JIT tuner never runs inside timed
regions. Wave-quantization staircases are visible at large $N$ and are
absorbed by the linear piece within fit error on Testbed~A; on Testbed~B
the staircase is explicit in the tile-aware model (\cref{fig:tile}).
Fitting: 2-piece max-affine regression by alternating assignment (50
iterations, least-squares per piece).

\begin{table}[t]
\centering\small
\caption{Fitted cost parameters (Testbed~A, fp8 masked grouped GEMM).}
\label{tab:params}
\begin{tabular}{lrrrr}
\toprule
shape & $b$ ($\mu$s/exp) & $\beta$ ($\mu$s/tok) & $\nstar$ & fit err \\
\midrule
Qwen3-30B  & 1.74  & 0.0108 & ${\approx}161$ & 4.9\% \\
DSv2-Lite  & 2.99  & 0.0179 & ${\approx}168$ & 4.1\% \\
DSv3       & 14.78 & 0.0945 & ${\approx}156$ & 8.0\% \\
DSv3 (bf16)& 45.29 & 0.1370 & ${\approx}331$ & 5.5\% \\
\bottomrule
\end{tabular}
\end{table}

\paragraph{All calibrated parameter sets.}
\cref{tab:allparams} consolidates every parameter set used in the
paper. The two $\beta$ values that appear for the DSv3 shape are not
inconsistent---they are fits to different measurement scopes.
The offline table (\cref{tab:params}) times the grouped-GEMM kernel in
isolation ($\beta{=}0.0945\,\mu$s/token); the EP8 microbenchmark refit
times the \emph{entire per-rank MoE pipeline}, including unfused
per-token kernels (routing, scatter/gather, quantization), giving
$\beta{=}0.358$ and a much larger activation-related constant
($a{:}\,116\to358\,\mu$s from GEMM-only to full pipeline). The larger
full-pipeline constants \emph{strengthen} the two-regime picture: the
flat regime is more pronounced end-to-end than in the kernel alone,
which is why the miscalibration ablation (GEMM-only parameters driving
a full-pipeline dispatch) loses 16\% at $B{=}128$. The Qwen3 full-%
pipeline refit is degenerate ($b<0$, below measurement noise) and is
reported as the negative control of \cref{sec:transfer}, not used.

\begin{table}[t]
\centering\small
\caption{Consolidated calibrated parameters ($\mu$s; per-rank scope
unless noted). ``offline'' = isolated-kernel grid
(\cref{tab:params}); ``mb refit'' = black-box refit on the
8-GPU Testbed~A EP8 microbenchmark's own per-rank log; ``serving'' =
three-term model in the deployed worker; Testbed~B fits are tile-aware
(\cref{eq:tilecost}). The traffic pair $(c_2,\gamma)$ is fitted once,
on the proxy pipeline's static-vs-uniform crossover, and reused
unchanged on Testbed~B; it is the least independently validated part of the
model (\cref{sec:limits}). The two full-pipe rows are the same model
on the $(a,b)$ ridge---see the identifiability analysis in
\cref{sec:transfer}.}
\label{tab:allparams}
\begin{tabular}{llrrrrrrr}
\toprule
context & scope & $a$ & $b$ & $b_2$ & $c$ & $\beta$ & $c_2$ & $\gamma$ \\
\midrule
DSv3, Testbed~A offline & kernel & --- & 14.78 & --- & --- & 0.0945 & --- & --- \\
DSv3, Testbed~A mb refit & GEMM & 116 & 12.99 & --- & 176 & 0.0851 & --- & --- \\
DSv3, Testbed~A mb refit & full pipe & 358 & 16.59 & --- & 266 & 0.358 & --- & --- \\
DSv3, Testbed~A refit it.2 & full pipe & 488 & 5.93 & --- & 294 & 0.356 & --- & --- \\
proxy (DSv2-Lite), Testbed~A serving & 3-term & 16.4 & 2.99 & --- & 15.6 & 0.0179 & 25.0 & 0.10 \\
DSv3, Testbed~B & tile & 37.1 & 8.32 & 2.65 & 23.4 & 0.0336 & 25.0 & 0.10 \\
Qwen3-235B, Testbed~B & tile & 34.9 & 3.95 & 1.23 & 34.8 & 0.0154 & 25.0 & 0.10 \\
\bottomrule
\end{tabular}
\end{table}

\section{Communication model}\label{sec:comm-model}
Per-rank features logged per step: bytes sent/received, source/dest
fan-in/out, measured all-to-all time. A 3-parameter linear model on
received tokens explains dispatch-side variance with 12.7\% makespan
prediction error; 5-parameter variants (send + degree terms) improve fit
marginally and change no policy ranking. Single-node NVLink only;
cross-node $k_r$ recalibration expected (L2).

\section{SGLang integration patch list}\label{sec:patches}
Seven idempotent patch points: (1) dispatch-algorithm registry entry;
(2) solver module install; (3) expert-location metadata init for
\texttt{init\_expert\_location}; (4) recorder gatherer accepting
\texttt{deepep\_mode=auto}; (5) EPLB-rebalance hook re-initializing
\sys{} solvers; (6) relaxed stream-capture mode for the refresher
thread; (7) recorder buffer-size cap. Pitfalls table (DeepEP token caps,
KV-pool sizing under \texttt{--enable-eplb}, NCCL fallback with
\texttt{--disable-custom-all-reduce}) in the artifact (link anonymized
for review).

\section{Additional phase panels}\label{sec:more-panels}
Qwen3-30B and DSv2-Lite shapes, both replication ratios, and regime maps
for all four traces are included in the artifact; rankings match the
main-text panels throughout.

\section{Additional result tables and figures}\label{sec:more-tables}

\begin{figure}[t]
\centering
\includegraphics[width=\linewidth]{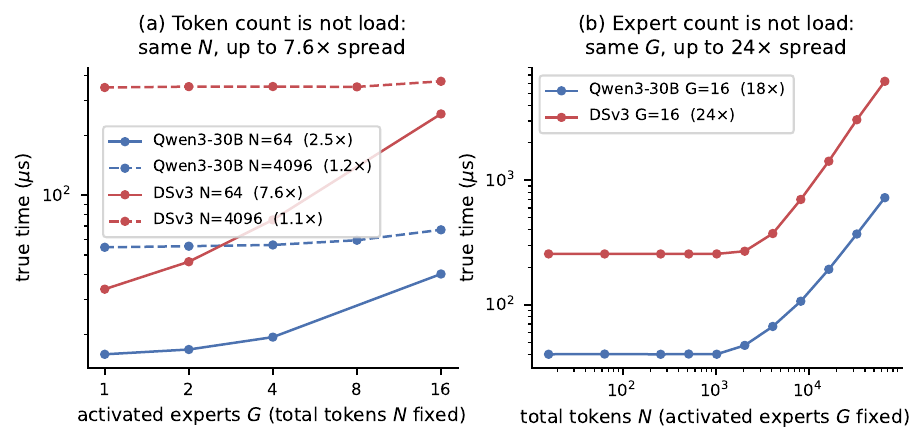}
\caption{Proxy-equivalence sets versus true time on the raw
measurement grid. (a) Same total tokens, different activation counts:
up to $7.6\times$. (b) Same activation count, different token counts:
up to $24\times$.}
\label{fig:proxy}
\end{figure}

\begin{figure}[t]
\centering
\includegraphics[width=\linewidth]{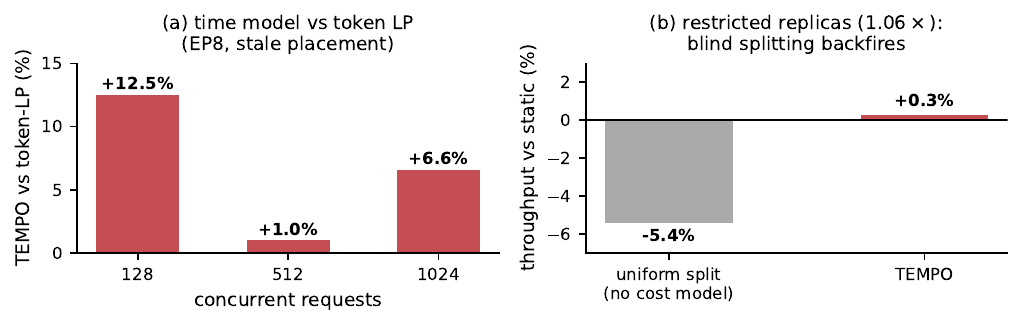}
\caption{Stale-placement serving on the DSv3-shape proxy (EP8, Testbed~A).
(a) \sys{} vs.\ the token-LP dispatcher on the identical data path.
(b) With a tight replica budget ($1.06\times$), uniform splitting
backfires; the cost model recovers parity with static.}
\label{fig:drift}
\end{figure}

\begin{figure}[t]
\centering
\includegraphics[width=\linewidth]{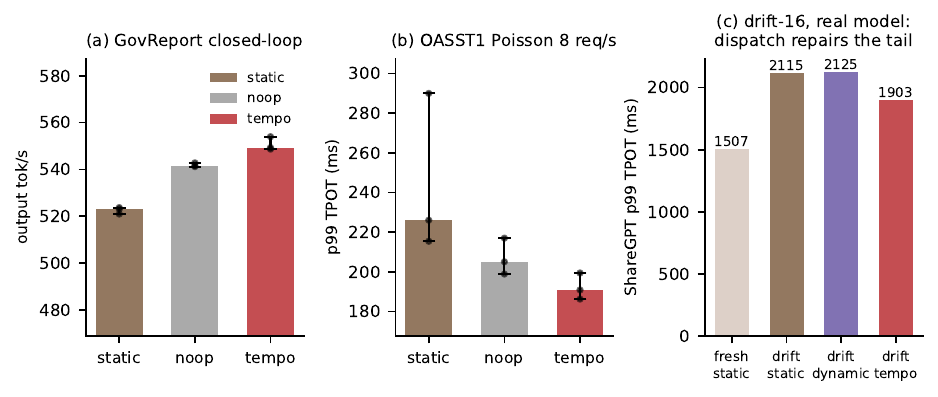}
\caption{Robustness of the Testbed~B win-region claims. (a,b) Three
independent repeat windows per policy (dots; bar = median, whiskers =
range): the GovReport throughput and Poisson-tail wins replicate with
non-overlapping ranges. (c) Drift-16 corruption applied to the real
model: stale placement inflates ShareGPT p99 by $40\%$; \sys{}
recovers a third of the damage while SGLang's dynamic rebalancer
recovers none---and no dispatcher recovers the throughput, which
requires re-provisioning replicas (placement's job).}
\label{fig:robust}
\end{figure}

\begin{figure}[t]
\centering
\includegraphics[width=\linewidth]{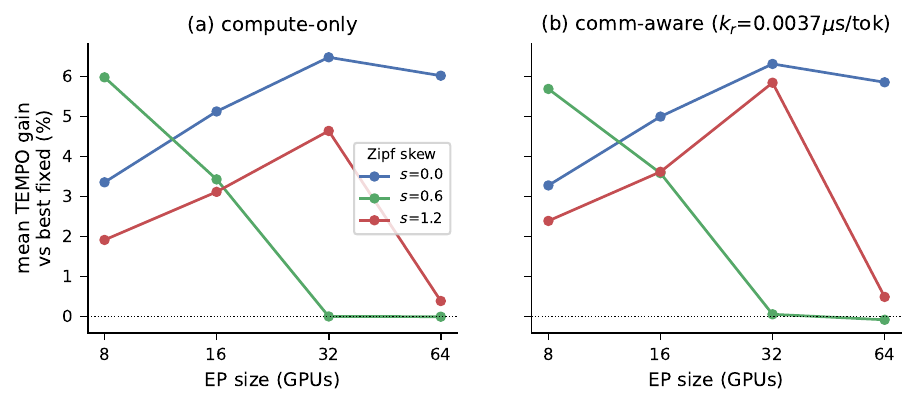}
\caption{Scale extrapolation, DSv3 shape, replication $1.25\times$:
mean \sys{} gain vs.\ best fixed policy across the batch ladder.}
\label{fig:scale}
\end{figure}

\begin{figure}[t]
\centering
\includegraphics[width=\linewidth]{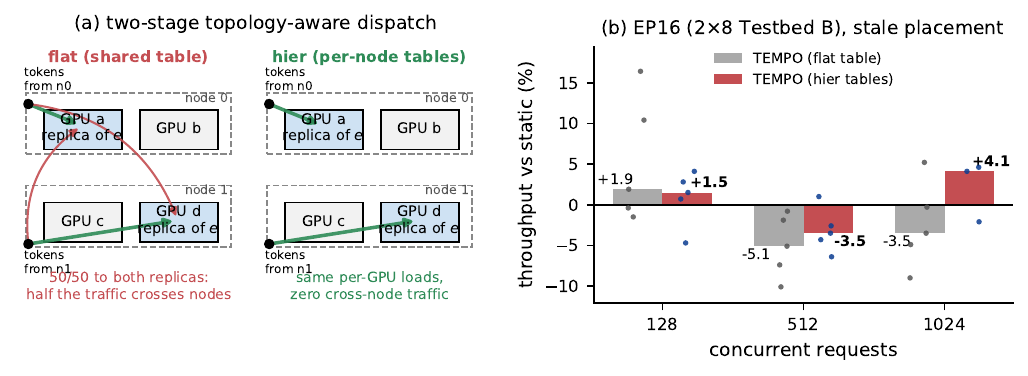}
\caption{Topology-aware dispatch on 2-node EP16 (Testbed~B, stale
placement). (a) The solve fixes per-GPU loads; a same-node-first
transportation split then assigns \emph{sources} to replicas, giving
one table per source node at zero kernel cost. (b) End-to-end: the
flat table loses to static at the all-to-all-bound point; the hier
split flips the sign (median of 5 paired windows, dots are
individual windows).}
\label{fig:hier}
\end{figure}

\begin{table}[t]
\centering\small
\caption{Cumulative stage ablation on the drift-16 scenario: median
makespan/static under the calibrated model (lower is better; $<1$ beats
static). Solve time at $B{=}1024$.}
\label{tab:stageabl}
\begin{tabular}{lrrrr}
\toprule
stages (cumulative) & $B{=}128$ & $B{=}512$ & $B{=}1024$ & ms \\
\midrule
seed only            & 1.000 & 0.883 & 0.869 & 0.3 \\
\ +aug.\ chains      & 0.902 & 0.940 & 0.932 & 0.3 \\
\ +local search      & 0.902 & 0.881 & 0.867 & 0.9 \\
\ +LP ensemble       & 0.902 & 0.880 & 0.865 & 3.5 \\
\midrule
seed+local, no chains & 0.907 & 0.881 & 0.866 & 0.3--2.2 \\
\bottomrule
\end{tabular}
\end{table}

\begin{table}[t]
\centering\small
\caption{Cost-model term ablation on the drift-16 scenario: solve with
terms deleted from the objective, score under the full model (median
makespan/static).}
\label{tab:termabl}
\begin{tabular}{lrrrr}
\toprule
objective & $B{=}128$ & $B{=}512$ & $B{=}1024$ & $B{=}2048$ \\
\midrule
full 3-term \eqref{eq:cost3} & 0.902 & 0.880 & 0.865 & 0.856 \\
drop traffic ($c_2,\gamma$)  & 0.902 & 0.946 & 0.924 & 0.862 \\
drop floor ($a,b$) $\equiv$ tokens only & 0.951 & 0.880 & 0.865 & 0.856 \\
\bottomrule
\end{tabular}
\end{table}

\begin{table}[t]
\centering\small
\caption{Faster-EPLB control (drift16 replay, makespan ratio vs.\
never-refreshed static; median/p95). ``uni'' = EPLB default uniform
split; ``cum16'' = placement rebuilt every 16 windows from cumulative
observed counts; W4 = trailing 4-window refresh.}
\label{tab:refresh}
\begin{tabular}{lcc}
\toprule
policy & $B{=}128$ & $B{=}512$ \\
\midrule
static, never refresh      & 1.00/1.00 & 1.00/1.00 \\
uni, never refresh         & 1.14/1.23 & 1.00/1.05 \\
uni + cum16 refresh        & 1.31/1.44 & 0.85/1.02 \\
static + cum16 refresh     & 1.00/1.15 & 1.03/1.33 \\
\sys{}, never refresh      & 1.00/1.10 & 0.89/0.97 \\
\sys{} + cum16 refresh     & 1.00/1.10 & 0.81/0.95 \\
\sys{} + W4 refresh        & 1.00/1.10 & 0.80/0.93 \\
\bottomrule
\end{tabular}
\end{table}

\begin{table}[t]
\centering\small
\caption{No-solve topology control on Qwen3-235B 2-node EP16, drift-16
placement: output-throughput delta vs.\ same-window static
(round~1/round~2).}
\label{tab:noslv}
\begin{tabular}{lrrr}
\toprule
requests & flat-uniform & hier-uniform & hier-\sys{} \\
\midrule
128  & $-1.2$/$-3.7\%$ & $-3.9$/$-7.0\%$ & $-1.3$/$-5.0\%$ \\
512  & $+0.3$/$-2.2\%$ & $-2.8$/$-1.8\%$ & $+1.1$/$-0.3\%$ \\
1024 & $-3.2$/$-3.4\%$ & $-3.2$/$-1.4\%$ & $-0.8$/$-1.1\%$ \\
\bottomrule
\end{tabular}
\end{table}

\section{Multi-node bring-up notes}\label{sec:bringup}
Three hazards, each worth $10$--$200\times$ in measured performance,
documented for reproducibility. (1)~Raw RDMA was healthy (365\,Gb/s
cross-node \texttt{ib\_write\_bw}) while NCCL hung in transport setup:
the container's HPC-X IBext plugin, auto-loaded by NCCL.
\texttt{NCCL\_NET\_PLUGIN=none} plus explicit HCA/GID selection
restored line rate (decode TPOT $3429\to7.7$\,ms). (2)~The stock NCCL
bundled with PyTorch reached ${\sim}2$\,Gb/s effective allreduce
bandwidth on the same fabric \emph{while reporting} \texttt{NET/IB}
transport---silently crawling; DeepEP's NVSHMEM/IBGDA path on the same
NICs sustained 50\,GB/s, localizing the fault. Preloading the cluster
vendor's NCCL restored $53\,\mu$s allreduce ($100$--$200\times$),
serving TPOT $4.3$\,s$\,\to42$\,ms. Diagnosing collective-transport
health per stack is a prerequisite for any multi-node EP claim.
(3)~fp8 on the new SM architecture: DeepGEMM masked kernels rejected
SGLang's fp32 scales (the architecture requires UE8M0 packing; fixed
by broadening the packing condition), and running LPLB required a
cublasdx fix we upstream---mathdx's bundled cutlass predates the new
architecture, selecting prior-generation float2 FFMA atoms that assert
at launch; defining the float2-math macros for the new architecture
restores it.

\section{MoonEP comparison details}\label{sec:moonep-details}
\emph{Model replay.} We replay MoonEP's policy---prefetch redundant
experts from the current router output so every rank receives exactly
$SK$ tokens---on the recorded Qwen3-235B distributions under the
tile-aware Testbed~B cost model, with two prefetch accountings: \emph{free}
(fully overlapped; optimistic) and \emph{charged} at a $4\times$
HBM-to-interconnect bandwidth ratio per duplicated expert. At decode
batches ($8$--$64$ tokens/GPU) even free prefetch reaches only $-4.8$
to $-10.2\%$ of static makespan---perfect token balance is
activation-blind, and \sys{} takes $-12.1$ to $-16.5\%$ on the same
batches; charged prefetch inverts the sign entirely ($+277$ to
$+320\%$): a duplicate must absorb several tiles' worth of tokens to
amortize its own weight movement, which decode batches never provide.
At prefill batches ($\ge$512 tokens/GPU) all policies converge to
token balance and the free variant exactly matches token-LP---
consistent with MoonEP's training-side claims, where prefetch overlaps
behind long compute.

\emph{Wall clock, real library.} We run MoonEP against DeepEP v2 on
8-GPU Testbed~B with MoonEP's own aligned benchmark (identical routing,
shared timing harness; DSv3 shape, 256 experts, hidden 7168, top-8;
skew swept over MaxVio $0.2$--$20$). At decode scale (128 tokens/rank)
DeepEP dispatch+combine is $95$--$107\,\mu$s and \emph{flat} in skew,
while MoonEP costs $528$--$829\,\mu$s, $66$--$77\%$ of it weight
prefetch ($283$--$502\,\mu$s); even MoonEP's on-GPU planning kernel
($57$--$61\,\mu$s) costs more than DeepEP's entire dispatch. At
prefill scale (8192 tokens/rank) the picture inverts as the phase
diagram predicts: DeepEP degrades with skew ($2.23\to2.85$\,ms as
MaxVio goes $0.2\to20$) while MoonEP stays flat (${\sim}2.55$\,ms),
crossing near MaxVio $\approx 20$.

\end{document}